\documentclass{llncs}

\usepackage{mathrsfs}
\usepackage{amssymb,amsmath}
\usepackage{perpage} 
\usepackage{array}
\usepackage{cite}

\MakePerPage{footnote} 

\newtheorem{observation}{\bf Proposition}

\begin{document}
\title{\large \bf Asymptotic collusion-proofness of voting rules: the case of large number of candidates}
\author{Palash Dey \and Y. Narahari\\ \texttt{\{palash,hari\}@csa.iisc.ernet.in}}
\institute{Department of Computer Science and Automation \\Indian Institute of Science - Bangalore, India.}
\date{}
\maketitle
\thispagestyle{empty}

\begin{abstract}
Classical results in voting theory show that strategic 
manipulation by voters is inevitable if  
a voting rule simultaneously satisfy certain 
desirable properties.  Motivated by this, we study 
the relevant question of how 
often a voting rule is manipulable. 
It is well known that 
elections with a large number of voters are rarely manipulable
under impartial culture (IC) assumption. 
However, the manipulability of voting rules when the number
of candidates is large has hardly been addressed in the literature
and our paper focuses on this problem.
First, we propose two properties  (1) asymptotic strategy-proofness 
and (2) asymptotic collusion-proofness,
with respect to new voters, which makes
the two notions more relevant from the perspective 
of computational problem of manipulation. 
In addition to IC, we explore a new culture of society where all score vectors of the candidates
are equally likely. This new notion has its motivation in computational social choice and
we call it  impartial scores culture (ISC) assumption.
We study asymptotic strategy-proofness and asymptotic collusion-proofness for 
plurality, veto, $k$-approval, and Borda voting rules
under IC as well as ISC assumptions. Specifically,
we prove bounds for the fraction of manipulable profiles
when the number of candidates is large.
Our results show that the size of the coalition and the tie-breaking rule 
play a crucial role in determining whether or not a voting rule 
satisfies the above two properties.
\end{abstract}

\keywords{Social choice theory, manipulation, collusion proofness, asymptotic behavior}

\section{Introduction}
In many real life situations including multi-agent systems, agents often 
need to agree upon a common decision although they may have different 
preferences over the possible alternatives. The central problem in this 
scenario is preference aggregation, that is, aggregating different 
preferences and agreeing on a joint plan. Voting is a special type of 
preference aggregation method that has been applied 
in many current problems, for example, collaborative filtering \cite{pennock2000social},  
planning among multiple automated agents \cite{ephrati1991clarke}, etc.
A wide variety of voting rules have been proposed in the literature.
A voting rule in conjunction with the population of voters and candidates
is commonly referred to as an election.

A fundamental problem of voting rules is the strategic manipulation by voters 
- sometimes voters are better off by voting non-truthfully. 
Informally, a voter is said to \textit{manipulate} an election if she does not report 
her true preference. The Gibbard Satterthwaite theorem \cite{gibbard1973manipulation,satterthwaite1975strategy} 
shows that every \textit{unanimous}\footnote{A voting rule is called \textit{unanimous} if 
whenever any candidate is most preferred by all voters, that candidate is the 
winner.} and \textit{non-dictatorial}\footnote{A voting rule is called \textit{non-dictatorial} 
if there does not exist any voter whose most preferred candidate is always 
the winner irrespective of others' votes.} voting rule is manipulable. 
Clearly, manipulation is undesirable and therefore we seek voting 
rules that are \textit{rarely} manipulable. Pattanaik \cite{pattanaik1975strategic} 
conjectured that ``the possibility of strategic voting by single individuals will be 
smaller the greater the number of the individuals.''

Pattanaik's above conjecture has been subsequently verified. 
Slinko \cite{slinko2002asymptotic} showed that for some common voting 
rules that, under impartial culture (IC)\footnote{Impartial culture assumption says that 
the voters' preferences are independent and uniformly distributed among all possible 
linear orders of the candidates.} assumption, the probability of 
drawing a manipulable profile at random goes to zero as the number of voters increases. 
He called such voting rules \textit{asymptotically strategy-proof}. 
However the above result does \textit{not} directly connect the computational problem  
of manipulation since it defines manipulable profiles through 
the existence of manipulation by current voters in the profile. We, in this 
paper, propose a definition of asymptotic strategy-proofness with 
respect to new voters which makes it consistent with the 
\textit{computational problem of manipulation} which is discussed below.

Bartholdi and coauthors \cite{bartholdi1989computational,bartholdi1991single} 
asked the following question: what if finding a manipulating preference 
is computationally so hard that the agents will not be 
able to find it? This led to the formulation and study of the computational problem of 
\textit{coalitional manipulation} (CM). In the CM problem, we are 
given a voting rule, a candidate (say $x$), a set of $c$ manipulators, and a 
voting profile of $n$ non-manipulators on $m$ candidates, and we are asked whether the manipulators can make the 
candidate $x$ win the election or not. 
The CM problem for many common voting rules has been shown to be $\mathbb{NP}$-Complete 
\cite{faliszewski2008copeland,xia2010scheduling}. 
This computational intractability seems 
to provide a kind of barrier against manipulation for many common voting rules. 
However, since intractability only provides a \textit{worst case} guarantee, how much 
protection computational hardness provides in actual practice, is questionable. 
This has been partially answered by Procaccia 
and Rosenschein \cite{procaccia2007average} by showing average 
case easiness of the manipulation problem assuming \textit{junta} distributions over the  
voters' preferences. To the best of our knowledge, 
the following reverse question still remains unanswered - given the inputs where 
the CM problem is actually hard, how much severe is the problem of 
manipulation on those inputs? This question is justified as  
there is no point in trying to stop manipulation at profiles which are 
\textit{game theoretically} strategy-proof. 
We address this question in this paper. 

It is known that the CM problem can be solved in  $O\left(\left(c+m!\right)^{m!}.\text{\textit{poly}
}(m,n,c)\right)$ time\footnote{\textit{poly}$(m,n,c)$ denotes the set of all polynomial functions over $m,n$, and $c$}. for 
any anonymous and \textit{efficient}\footnote{A voting rule is \textit{efficient} 
if winner determination is \textit{poly(m,n)} time computable.} voting rule. The computational complexity 
arises from the fact that there are $O\left(\left(c+m!\right)^{m!}\right)$ different ways to distribute the $c$ votes 
of manipulators among $m!$ possible preferences. 
Hence all the hard instances of the CM problem are elections with a large number of candidates. 
This motivates us to study the severity of manipulation in elections with a large number of 
candidates. Asymptotic strategy-proofness has been classically studied 
keeping the number of candidates fixed and varying only the number 
of voters. Hence the classical asymptotic strategy-proofness \textit{fails} to connect 
itself with computational problem of manipulation since for constant number of 
candidates, complexity barrier does not exist. Also Nitzan \cite{nitzan1985vulnerability} 
empirically showed that the problem of manipulation is severe only in societies with 
small number of voters. Hence it is the elections with small number of voters 
which we should possibly target for preventing manipulation. However small elections with only a few candidates 
are always easily manipulable - manipulators can just try all possible linear orders 
over the candidates. This leaves only one case open - elections with fixed number of 
voters but large number of candidates. 

\subsection*{Contributions}
Besides active theoretical interest, there are many 
practical scenarios as well where the number of candidates could be large, for example, 
meta search engines, employee selection by a committee, etc. In this paper, we study 
the manipulability of voting rules when the number of voters is fixed and
the number of candidates increases to infinity. 
The specific contributions of this paper are as follows.
\begin{itemize}
\item We define the notions of asymptotic strategy-proofness 
 and asymptotic collusion-proofness with respect to new voters 
 (definition\nobreakspace \ref {newdefCollusionProofness}) which makes 
 the notions more relevant from the perspective 
 of computational problem of manipulation. We show that the existing results 
 on the manipulability for plurality, veto, and $k$-approval voting rules continue 
 to hold under the proposed definition of asymptotic strategy-proofness. 
 \item We explore a new culture of society where all score vectors of the candidates 
 are equally likely. This new notion has its motivation in computational social choice. We call 
 this assumption impartial scores culture (ISC) assumption. 
 \item We provide bounds for the fraction of manipulable profiles for voting rules
when the number of candidates is large. These bounds immediately tell us
whether or not the given voting rule is asymptotically collusion-proof. 
 We prove asymptotic results for plurality, veto, 
 and $k$-approval voting rules under both IC and ISC assumptions. 
 We prove Borda rule is not asymptotic strategy-proof on the number of candidates under 
 the IC assumption but the problem is still unresolved under the ISC assumption. 
 Our results show that the size of the coalition and the tie-breaking rule being 
 used play a crucial role in determining whether or not a voting rule is 
 asymptotically strategy-proof.
\end{itemize}

Our results on asymptotic collusion-proofness of voting rules 
are summarized as in Table 1. 
{\small
\begin{center}
\begin{table}[here]
\begin{minipage}{\textwidth}
 \begin{tabular}{|>{\centering\arraybackslash} m{1.6cm}|>{\centering\arraybackslash}m{1.9cm}|>{\centering\arraybackslash}m{2.5cm}|
 >{\centering\arraybackslash}m{0.7cm}|>{\centering\arraybackslash}m{2.6cm}|>{\centering\arraybackslash}m{1cm}|}\hline
  Cases& Plurality for\footnote{Plurality Rule with lexicographic tie breaking rule for the manipulators}	
  &Plurality against\footnote{Plurality Rule with lexicographic tie breaking rule against the manipulators}	
  & Veto		& $k$-Approval, $k>1$ 		& Borda\\\hline
  $c=1$, IC	& $\surd$		& $\times$		& $\times$	& $k=o(m):\times$		& $\times$\\\hline
  $c=1$	, ISC	& $\surd$		& $\times$		& $\times$	& $k=o(m):\times$		& $\times$\\\hline
  $c>1$	, IC	& $\times$		& $\times$		& $\times$	& $k=o(m):\times$		& $\times$\\\hline
  $c>1$	, ISC	& $\times$		& $\times$		& $\times$	& $k=o(m):\times$		& ?\\\hline
 \end{tabular}
\caption{Asymptotic strategy-proofness results on candidates}
\end{minipage}
\end{table}
\end{center}}

This work has been published before as an extended abstract in the international conference on Autonomous agents and multi-agent systems~\cite{dey2014asymptotic}.

\section{Preliminaries and New Definitions}\label{Preliminaries}
Let $\mathcal{V}=\{v_1, v_2, \dots, v_n\}$ be the set of \emph{voters} 
and $\mathcal{C}=\{c_1,c_2,\dots,c_m\}$ the set of \emph{candidates}.
Each voter $v_i$ has a \emph{preference} $\prec_i$ over the candidates which is a linear order over $\mathcal{C}$.
The set $\mathcal{L(C)}$ denotes the set of all linear orders over $\mathcal{C}$. 
A map $r_c:\uplus_{n\in\mathbb{N}^+}\mathcal{L(C)}^n\longrightarrow 2^\mathcal{C}\setminus\emptyset$
is called a \emph{voting correspondence}\footnote{By $\uplus$, we denote disjoint union. 
$\mathbb{N}^+ :=\{1, 2, 3, \dots\}$. $2^\mathcal{C}$ denotes the power set of $\mathcal{C}$. $\emptyset$ 
denotes the empty set.}. A map $t:2^\mathcal{C}\setminus\emptyset\longrightarrow \mathcal{C}$ is called a \emph{tie breaking rule}.
Commonly used tie breaking rules are \emph{lexicographic} tie breaking rule where ties are broken 
in accordance with a $\succ_t \in \mathcal{L(C)}$. We, in this 
paper, study two types of tie breaking rules - lexicographic tie breaking rule where ties 
are broken in favor of the manipulators and against the manipulators respectively. 
A \emph{voting rule} is $r=t\circ r_c$\footnote{$\circ$ denotes composition of mappings.}. 
For example, $\alpha=\left(\alpha_1,\alpha_2,\dots,\alpha_m\right)\in\mathbb{R}^m$ with $\alpha_1\ge\alpha_2\ge\dots\ge\alpha_m$ naturally defines a
voting rule - a candidate gets score $\alpha_i$ from a vote if it is placed at the $i^{th}$ position
and winner is the candidate with maximum score. This voting rule is called a \emph{positional scoring rule} based voting rule with score vector $\alpha$.
For $\alpha=\left(m-1,m-2,\dots,1,0\right)$, we get the \emph{Borda} voting rule. With $\alpha_i=1$ $\forall i\le k$ and $0$ else, 
the voting rule we get is known as $k$-\emph{Approval}. \emph{Plurality} is 1-\emph{Approval} and \emph{veto} is $(m-1)$-\emph{Approval}.

A voting rule is called \emph{anonymous} if the names of the voters are irrelevant. Formally, a 
voting rule $r$ is anonymous if
\[\forall n\in \mathbb{N}^+, \forall (\succ_1,\dots ,\succ_n)\in\mathcal{L(C)}^n, 
\forall \sigma \in \mathfrak{S}_n\footnote{$\mathfrak{S}_n$ \text{denotes the set of all permutations of the set} \{1, 2, \dots, n\}.}\]
\[r(\succ_1,\dots ,\succ_n) = r(\succ_{\sigma(1)},\dots ,\succ_{\sigma(n)})\]
A voting rule is called \textit{neutral} if the names of the candidates are immaterial. That is, 
\[\forall n\in \mathbb{N}^+, \forall (\succ_1,\dots ,\succ_n)\in\mathcal{L(C)}^n, \forall \sigma \in \mathfrak{S}_m\]
\[r(\sigma\circ\succ_1,\dots ,\sigma\circ\succ_n) = \sigma\circ r(\succ_1,\dots ,\succ_n)\]
A voting rule is called \emph{strategy-proof} if voters are not worse off by reporting 
true preferences than reporting any other preference. That is,
\[\forall n\in \mathbb{N}^+, \forall \succ^n\in\mathcal{L(C)}^n, \forall \succ,\succ^\prime\in\mathcal{L(C)},\]
\[r(\succ^n,\succ)\succ r(\succ^n,\succ^\prime)\]
A \emph{non-strategy-proof} voting rule is called \emph{manipulable}. 
The Gibbard and Satterthwaite Theorem 
says that, any unanimous, and non-dictatorial voting rule with at least three candidates is necessarily manipulable.

\subsection{Asymptotic Collusion-proofness and Strategy-proofness}\label{AsymptoticStrategyproofness}
Given a voting rule $r$, a set of candidates $\mathcal{C}$, $n$ 
number of truthful voters, and a coalition size $c$, we define 
$c$-collusion-proof voting profiles as follows.
\begin{definition}{(c-Collusion-proof Voting Profile)}\label{newdefCollusionProofness}
 A voting profile $\succ^n \in \mathcal{L(C)}^n$ is called \textit{c-collusion-proof} if $\forall \succ_1^c,\succ_2^c \in \mathcal{L(C)}^c$, 
 \[ r(\succ^n,\succ_1^c) \succ_1^c r(\succ^n,\succ_2^c) \footnote{\text{We say } $a (\succ_1,\dots ,\succ_n) b$ \text{ if } $a \succ_i b$ $\forall i=1, 2, \dots, n.$} \]
\end{definition}
For $c=1$, these profiles are called strategy-proof voting profiles. 
Given a voting rule $r$, we denote the set of all $c$-collusion-proof voting profiles by $T_r^c(\mathcal{C})$. The set of all 
strategy-proof voting profiles is denoted by $T_r(\mathcal{C})$. We will drop the subscript $r$ whenever the context is clear. 
Notice that manipulation by $c$ number of new voters is \textit{game theoretically} not possible in the above defined 
collusion-proof profiles. We define 
the above notions with respect to \textit{new} voters which directly implies that the hardness of the CM problem at collusion-proof instances 
is of no value. Previously, Slinko \cite{slinko2002asymptotic} defined strategy-proof 
voting profiles with respect to the existing voters. Obviously, our definition is more aligned with the formulation of the CM problem 
and hence the results give direct implications of the usefulness the CM problem being hard at a particular instance.

If $r$ is a score based voting rule\footnote{A voting rule is called a score based voting rule if winner is determined solely based on 
scores. Positional scoring rules are examples of such rules.}, then we define strategy-proof and collusion-proof scoring profiles as 
the scores that each candidate receive in some strategy-proof and collusion-proof voting profiles respectively. For example, $(8,2,2)$ 
is a scoring profile in a plurality election with $12$ voters and $3$ candidates. This is also a $5$-collusion-proof scoring profile 
since $5$ new voters can not manipulate this election. 
The set of all possible scoring profiles in an election with $n$ voters is denoted by $S_r^n(\mathcal{C})$. 
When the voting rule $r$ is clear from the context, we will drop $r$ from the notation.  

Clearly, since in score based voting rules, the winner is decided solely using the scores of the candidates, the above definitions are independent 
of the choice of strategy-proof or collusion-proof voting profiles if there are two or more voting profiles giving rise to the same 
scoring profile. 
The notion of collusion-proof scoring profiles has its motivation in computational social choice. For scoring rules, the CM 
problem can also be thought of as if, we are given 
a scoring profile generated out of the votes of the truthful voters and the question remains the same. Here 
collusion-proofness of a scoring profile determines whether the hardness of the CM problem is at all required or not at that profile. 
The above notions naturally lead us to study a society where all the scoring profiles are equally likely. We name this assumption the 
Impartial Scores Culture (ISC) assumption.

With the above definitions, the notions of \textit{asymptotic strategy-proofness} and \textit{asymptotic collusion-proofness} on voters and 
candidates are defined as follows. Let $\mathcal{D}$ be a probability distribution over the voting profiles.
\begin{definition}{(Asymptotic Strategy-proofness on Voters)}
 A voting rule r is called \textit{asymptotically strategy-proof on voters} if for all fixed and finite $\mathcal{C}$,
 \[ \lim_{n \to \infty} Prob_{\succ^n \sim \mathcal{D}} \{\succ^n \in T_r(\mathcal{C})\} = 1\]
\end{definition}
In words, a voting rule is called asymptotically strategy-proof on voters if \textit{almost all} the voting profiles are strategy-proof 
as we increase the number of voters. On similar lines, we define asymptotic strategy-proofness on candidates as follows. 
\begin{definition}{(Asymptotic Strategy-proofness on Candidates)}
 A voting rule r is called \textit{asymptotically strategy-proof on candidates} if $\exists N_0\in \mathbb{N}$ 
 such that $\forall n\ge N_0$,
 \[ \lim_{|\mathcal{C}| \to \infty} Prob_{\succ^n \sim \mathcal{D}} \{\succ^n \in T_r(\mathcal{C})\} = 1\]
\end{definition}
The above concepts are generalized to asymptotic collusion-proofness as follows. 
\begin{definition}{(Asymptotic c-Collusion-proofness on Voters)}
 A voting rule r is called \textit{asymptotically c-collusion-proof on voters} if for all fixed and finite $\mathcal{C}$,
 \[ \lim_{n \to \infty} Prob_{\succ^n \sim \mathcal{D}} \{\succ^n \in T_r^c(\mathcal{C})\} = 1\]
\end{definition}
\begin{definition}{(Asymptotic c-Collusion-proofness on Candidates)}
 A voting rule r is called \textit{asymptotically c-collusion-proof on candidates} if $\exists N_0\in \mathbb{N}$ 
 such that $\forall n\ge N_0$,
 \[ \lim_{|\mathcal{C}| \to \infty} Prob_{\succ^n \sim \mathcal{D}} \{\succ^n \in T_r^c(\mathcal{C})\} = 1\]
\end{definition}

\section{Results}

We show that with this \textit{modified} definition of 
strategy-proofness, many previously known results still hold. For almost 
all the voting rules studied in this paper, we show that they are not \textit{asymptotically strategy-proof on candidates}. 
That is, the problem of manipulation is really severe where computationally hard instances are. This shows that 
although there are evidences shown in the literature for computational hardness not being a very strong barrier, 
there is a good chance that the profiles which it is able to protect are really manipulable. We show the above results under both 
IC and ISC societal assumptions. IC assumption states that the voting profiles follow $\mathcal{U}(\mathcal{L(C)}^n)$ 
\footnote{Given a finite set $A$, $\mathcal{U}(A)$ denotes the uniform probability distribution over $A$.} 
probability distribution and the ISC assumes $\mathcal{U}(S_r^n(\mathcal{C}))$ probability distribution over 
scoring profiles. The following sections contain our results. In the interest of space, proof of some theorems 
have been skipped. They appear in the appendix.

\subsection{Plurality Voting Rule}

We first provide a complete characterization of the scoring profiles which are strategy-proof
under \emph{Plurality} voting rule.
\begin{observation}\label{manplural}
 For Plurality rule, following and only following  
 scoring profiles \\$(x_1,x_2,\dots,x_m)\in S^n([m])$ are strategy-proof:\footnote{$[m]:=\{1, 2, \dots, m\}$.}
 \begin{enumerate}
  \item
  \begin{enumerate}
   \item $|x_i-x_j|=1$ $or$ $0,\forall 1\le i,j\le m$ with lexicographic tie breaking for the manipulator.\label{almostequal}
   \item $x_1=x_2=\dots=x_m$ with lexicographic tie breaking against the manipulator.
  \end{enumerate}
  \item $\exists w\in [m]\text{ such that } x_w-x_i\ge2,\forall 1\le i\le m$, $i\neq w$, that is $c_w$ is the winner and 
  its score is greater than that of every other candidate by at least 2.
  \item $\exists w\in [m]\text{ such that } x_w> x_i,\forall 1\le i\le m$, $i\neq w$, with lexicographic tie breaking rule against the manipulator.
 \end{enumerate}
\end{observation}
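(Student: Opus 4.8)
The plan is to treat Proposition~\ref{manplural} as a finite case analysis on the shape of the score vector near the top. Fix a scoring profile $(x_1,\dots,x_m)\in S^n([m])$, realized by some voting profile. Since the rule is Plurality, a single new voter whose top candidate is $c_j$ changes the score vector to $(x_1,\dots,x_j+1,\dots,x_m)$ and affects nothing else; so, as far as the outcome is concerned, the new voter merely picks one candidate to receive one extra point. Write $s=\max_i x_i$ and let $T=\{i:x_i=s\}$ be the set of current co-winners. Two easy observations drive everything: (i) by supporting any $h\in T$ the new voter makes $c_h$ the unique winner, since its score becomes $s+1$; (ii) no candidate outside $T$ can be pushed strictly above $s$. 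The profile is strategy-proof precisely when, for every true preference $P\in\mathcal{L(C)}$, supporting the $P$-maximal candidate gives an outcome that is $P$-weakly preferred to the outcome of supporting any other candidate.

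Next I would split on the gap $s-s'$, where $s'$ is the second-largest score (taking $s'=s$ when $|T|\ge 2$). If $s-s'\ge 2$, then after the extra point every candidate other than the unique top candidate $c_w$ is still below $s$, so $c_w$ wins regardless of the new voter's ballot; hence the profile is strategy-proof under either tie-breaking rule, which is case~(2). The remaining, genuinely interactive, regimes are: $|T|=1$ with $s'=s-1$ (a strict winner of margin one), and $|T|\ge 2$ (a tie at the top). In both, a well-chosen ballot can move a score-$(s-1)$ candidate (or a top candidate, or the supported candidate itself) into or out of the decisive tie at level $s$, and whether this is exploitable is dictated entirely by how that tie is broken --- which is exactly why the two tie-breaking conventions produce different lists.

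For the tie-breaking rule against the manipulator, a score-$(s-1)$ candidate that is lifted to $s$ only enters a tie that is then resolved against the new voter, so it can never be forced to win; from this I would deduce that in the interactive regimes the strategy-proof profiles are exactly those with a genuine strict winner (case~(3)) and, among tied-top profiles, only the all-equal ones (case~(1b)) --- for any other tied-top profile there is a preference whose favorite sits strictly below level $s$ with some top candidate ranked second, so ``support a top candidate'' strictly beats ``support your favorite.'' For the tie-breaking rule for the manipulator the mechanism is the mirror image: truthfully supporting one's favorite always inserts that favorite into the level-$s$ tie whenever the favorite has score at least $s-1$, and the manipulator-favorable tie-break then elects it; conversely, when the profile is not ``flat near the top'' one exhibits a true preference (favorite at a low score, a score-$(s-1)$ candidate ranked just below it) for which a misreport strictly helps. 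Collecting the cases yields the stated characterization.

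I expect the necessity direction in the tied-top regime to be the main obstacle: one must produce, uniformly over all profiles that are not of the listed shapes, an explicit true preference together with a profitable deviation, and these witnesses depend both on the precise value of $s-s'$ and on which of the two tie-breaking rules is in force, so the construction has to be carried out --- and checked --- separately for the two conventions.
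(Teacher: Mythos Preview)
Your proposal takes essentially the same approach as the paper: a finite case analysis on the gap between the top two scores, verifying sufficiency case by case (case~1 because truthful voting already elects the voter's favourite, cases~2 and~3 because the new voter is non-pivotal) and necessity by exhibiting a voter whose favourite has a low score but who can nevertheless steer the outcome among the near-top candidates. The paper's argument is terser and does not split explicitly by the two tie-breaking conventions as you do, but the decomposition and the manipulation witnesses you describe are exactly the ones the paper uses.
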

In the profiles in 1(a), the scores of each pair of candidates differ by at most 1. 
We denote the set of all these scoring profiles by $E^n([m])$ for any scoring rule. The set of all corresponding 
voting profiles are denoted by $F^n([m])$. 
For $c$-collusion-proofness, we have the following characterization. 
\begin{observation}\label{manpluralcoalitional}
 For plurality rule, following and only following scoring profiles $(x_1,x_2,\dots,x_m)\in S^n([m])$ are c-collusion (c$>$1) proof:\\
  $\exists w\in\{1,2,\dots,m\}\text{ such that } x_w-x_i\ge c+1,\forall 1\le i\le m$ $i\neq w$ , that is $c_w$ is the winner and 
  its score is at least c+1 greater than that of every other candidate.
\end{observation}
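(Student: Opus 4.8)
The plan is to prove the two directions of the equivalence separately; the forward (sufficiency) direction is an immediate invariance argument and essentially all the work sits in the converse. \textbf{Sufficiency:} suppose some candidate $w$ satisfies $x_w - x_i \ge c+1$ for every $i \ne w$, so that $w$ is the strict winner of the truthful scoring profile. Under plurality the $c$ new voters can raise the score of any single candidate by at most $c$: if their first-place votes are distributed so that candidate $i$ receives $y_i$ of them, with $y_i \ge 0$ and $\sum_i y_i = c$, then afterwards $w$ has score at least $x_w$ while every $i \ne w$ has score at most $x_i + c \le x_w - 1 < x_w$. Hence $w$ remains the strict winner no matter how the coalition votes and no matter which tie-breaking rule is used, so the coalition cannot change the outcome and therefore cannot benefit; the profile is $c$-collusion-proof. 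Note that this argument is completely insensitive to the tie-breaking rule, which is exactly why the threshold is the clean value $c+1$.

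\textbf{Necessity} (contrapositive): assume no candidate $w$ satisfies $x_w - x_i \ge c+1$ for all $i \ne w$, and exhibit a coalition of $c$ new voters that changes the winner to a candidate all of them prefer. Let $w_0$ be the winner of the truthful profile. Since the displayed condition fails for $w = w_0$, there is $i_0 \ne w_0$ with $x_{w_0} - x_{i_0} \le c$, equivalently $x_{i_0} + c \ge x_{w_0} \ge x_j$ for every $j$. Let all $c$ new voters have a preference in which $i_0$ is ranked first and $w_0$ is ranked below $i_0$ (the rest of each ballot is irrelevant under plurality). After their votes $i_0$ has score $x_{i_0}+c$, which is at least as large as the score of every other candidate, so $i_0$ is tied for first, and breaking ties in the manipulators' favour makes $i_0$ the winner. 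Since $i_0 \ne w_0$ and every coalition member strictly prefers $i_0$ to $w_0$, the profile is not $c$-collusion-proof.

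\textbf{The main obstacle} is concentrated in the boundary case and the tie-breaking bookkeeping of the necessity direction. When $x_{i_0}+c = x_{w_0}$, candidate $i_0$ wins only because ties are resolved in the coalition's favour; this is precisely why the stated threshold is $c+1$ rather than $c$, and a careful write-up should check the behaviour separately for each of the two tie-breaking conventions the paper considers. A second subtlety appears if one wants the same construction to witness the stronger, fully game-theoretic reading of Definition~\ref{newdefCollusionProofness}, in which every coalition member must strictly gain relative to voting truthfully: then one must additionally choose the coalition's true preferences so that their truthful ballots do not already produce $i_0$ or unseat $w_0$ --- for plurality with at least three candidates this is easily arranged by spreading the truthful ballots over low-scoring candidates, and it is here that the hypothesis $c>1$ supplies enough slack to retain the power to swing the election while keeping the truthful outcome at $w_0$. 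Beyond these checks the proof is routine.
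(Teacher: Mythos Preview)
Your sufficiency direction is correct and is exactly the non-pivotality argument the paper has in mind.  The necessity direction, however, has a real gap.  You let every coalition member's \emph{true} preference place $i_0$ on top; under plurality this means that their \emph{truthful} ballots already hand $i_0$ the extra $c$ points and, with tie-breaking in their favour, already elect $i_0$.  So the coalition obtains its best feasible outcome without misreporting, and you have not exhibited any $\succ_1^c,\succ_2^c$ with $r(\succ^n,\succ_2^c)\succ_1^c r(\succ^n,\succ_1^c)$ as Definition~\ref{newdefCollusionProofness} requires.  The comparison you actually make is between ``outcome with the coalition voting'' ($i_0$) and ``outcome with no coalition'' ($w_0$), which is not the relevant one.

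You notice this in your obstacle paragraph and propose to put low-scoring candidates at the top of the true preferences instead, but that flatly contradicts the construction two paragraphs earlier, and you neither reconcile the two nor check that such a low-scoring candidate is always available.  The paper's proof (``on similar lines to Proposition~\ref{manplural}'') is precisely this second idea carried out: the coalition's true top is a candidate so far behind that even $c$ extra votes cannot elect it, so the truthful outcome remains $w_0$; the coalition then misreports $i_0$ on top, which (since $x_{i_0}+c\ge x_{w_0}$) elects $i_0$, and every member prefers $i_0$ to $w_0$.  To repair your argument you should move that construction into the necessity paragraph proper and verify the existence of a suitable low-scoring third candidate --- or, in the residual ``almost-equal'' configurations where none exists, argue separately exactly as Proposition~\ref{manplural} does for $c=1$.
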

The following lemma provides the cardinality of $|S^n([m])|$ for plurality election. 
\begin{lemma}\label{pluralityscorecard}
 For Plurality rule, $|S^n([m])| = {n+m-1 \choose m-1} $.
\end{lemma}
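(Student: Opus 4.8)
The plan is to set up a bijection between plurality scoring profiles on $n$ voters and $m$ candidates and the set of weak compositions of $n$ into $m$ parts, and then invoke the standard stars-and-bars count.

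First I would observe that under \emph{Plurality} ($1$-\emph{Approval}), the score a candidate receives from a vote depends only on whether that candidate is ranked first; hence from a voting profile $\succ^n\in\mathcal{L([m])}^n$ the resulting scoring profile $(x_1,\dots,x_m)$ is exactly $x_i=|\{j : c_i \text{ is top-ranked in } \succ_j\}|$. Since each of the $n$ voters top-ranks exactly one candidate, we have $x_i\in\mathbb{Z}_{\ge 0}$ for all $i$ and $\sum_{i=1}^m x_i = n$. So $S^n([m])$ is contained in the set $W=\{(x_1,\dots,x_m)\in\mathbb{Z}_{\ge 0}^m : \sum_{i=1}^m x_i = n\}$.

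Next I would argue the reverse containment: every $(x_1,\dots,x_m)\in W$ is realized by some voting profile. Given such a vector, construct a profile in which, for each $i$, exactly $x_i$ voters rank $c_i$ in the top position (the relative order of the remaining $m-1$ candidates in each such vote is irrelevant and may be fixed arbitrarily); this uses $\sum_i x_i = n$ voters in total and yields precisely the scoring profile $(x_1,\dots,x_m)$. Therefore $S^n([m]) = W$.

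Finally I would count $|W|$ by the classical stars-and-bars argument: the number of weak compositions of $n$ into $m$ ordered nonnegative parts is $\binom{n+m-1}{m-1}$, giving $|S^n([m])| = \binom{n+m-1}{m-1}$. There is no real obstacle here; the only point needing a line of care is the surjectivity claim (that every vector in $W$ is achievable), which the explicit construction above settles.
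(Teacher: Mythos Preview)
Your argument is correct and follows the same approach as the paper: identify $S^n([m])$ with the set of nonnegative integer solutions to $\sum_{i=1}^m x_i = n$ and apply the stars-and-bars formula. You are simply more explicit than the paper in verifying that every such vector is realized by some voting profile.
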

\begin{theorem}
 Let A be the set of all $c$-collusion-proof Plurality scoring profiles. Then $\frac{|A|}{|S^n([m])|} \ge (\frac{n-c}{n+1})^{m-1}$.
\end{theorem}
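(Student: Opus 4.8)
The plan is to use the exact characterization of $c$-collusion-proof Plurality scoring profiles from Proposition~\ref{manpluralcoalitional} to rewrite $A$ in a form whose cardinality is easy to bound from below, and then to finish with an elementary comparison of binomial coefficients.

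We may assume $n > c$: if $n = c$ the right-hand side is $0$ and there is nothing to prove, and $n < c$ is outside the regime of interest. By Proposition~\ref{manpluralcoalitional} (for $c=1$ one would instead use the relevant clause of Proposition~\ref{manplural}, which only makes $A$ larger), a profile $(x_1,\dots,x_m) \in S^n([m])$ lies in $A$ if and only if there is an index $w$ with $x_w - x_i \ge c+1$ for all $i \ne w$; since $c+1 \ge 2 > 0$, such a $w$ is necessarily unique. The key step is the substitution $x = y + (c+1)\,e_w$, where $e_w$ is the $w$-th unit vector: I will show that $(y,w) \mapsto y + (c+1) e_w$ is a bijection from $\{\,(y,w) : y \in S^{n-c-1}([m]),\ y_w = \max_{j} y_j\,\}$ onto $A$. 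For surjectivity, given $x \in A$ with its unique winner $w$, put $y := x - (c+1) e_w$; then $y_j = x_j \ge 0$ for $j \ne w$, $y_w = x_w - c - 1 \ge x_i = y_i$ for all $i \ne w$ (so in particular $y_w \ge 0$), and $\sum_j y_j = n - c - 1$, hence $y \in S^{n-c-1}([m])$ with $y_w = \max_j y_j$. The map lands in $A$ because adding $c+1$ to a maximal coordinate gives $x_w - x_i = (y_w - y_i) + (c+1) \ge c+1$ for every $i \ne w$. Injectivity holds since $y + (c+1) e_w = y' + (c+1) e_{w'}$ forces $w = w'$ by uniqueness of the winner of that profile, and then $y = y'$.

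Writing $M(y) := |\{\, j : y_j = \max_k y_k \,\}| \ge 1$, the bijection yields
\[ |A| \;=\; \sum_{y \in S^{n-c-1}([m])} M(y) \;\ge\; \bigl| S^{n-c-1}([m]) \bigr| \;=\; \binom{n-c+m-2}{m-1}, \]
where the last equality is Lemma~\ref{pluralityscorecard} with $n-c-1$ voters (legitimate since $n>c$ gives $n-c-1 \ge 0$). Combining this with $|S^n([m])| = \binom{n+m-1}{m-1}$ (Lemma~\ref{pluralityscorecard} again) and cancelling factorials,
\[ \frac{|A|}{|S^n([m])|} \;\ge\; \frac{\binom{n-c+m-2}{m-1}}{\binom{n+m-1}{m-1}} \;=\; \prod_{j=0}^{m-2} \frac{n-c+j}{n+1+j}. \]
I will conclude by checking termwise that $\frac{n-c+j}{n+1+j} \ge \frac{n-c}{n+1}$ for every $j \ge 0$; after cross-multiplication this reduces to $j(c+1) \ge 0$, which is clear. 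Taking the product over the $m-1$ values $j = 0, \dots, m-2$ gives $\frac{|A|}{|S^n([m])|} \ge \bigl(\frac{n-c}{n+1}\bigr)^{m-1}$.

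The substance of the argument is the bijection; everything afterward is bookkeeping. The point needing the most care is the uniqueness of the winning candidate $w$ in a $c$-collusion-proof profile — it is what makes the map well defined and injective and prevents over- or under-counting — together with tracking the case $n > c$ so that $S^{n-c-1}([m])$ and Lemma~\ref{pluralityscorecard} are invoked legitimately.
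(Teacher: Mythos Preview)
Your proof is correct and follows essentially the same approach as the paper: both shift $c+1$ points from $S^{n-c-1}([m])$ onto a maximal coordinate to land in $A$, yielding $|A|\ge\binom{m+n-c-2}{m-1}$, and then bound the ratio of binomial coefficients termwise. The paper is slightly terser---it simply picks ``the'' winner (via the ambient tie-breaking rule) to get an injection rather than your full bijection with multiplicity $M(y)$---but the substance is identical.
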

\begin{proof}
 Let $c$ be the coalition size; $c$ = $o(n)$. Consider the following map,
 \[f:S^{n-c-1}([m])\longrightarrow A, x:=(x_1, \dots, x_m)\mapsto (x_1, \dots,x_w+c+1,\dots,x_m)\]
 $c_w$ is the winner of scoring profile $x$. $f$ is injective. Hence from Lemma\nobreakspace \ref {pluralityscorecard}, \\
 \[|A|\ge|S^{n-c-1}([m])|={m+n-c-2\choose m-1}\]
 Now we have, 
 \begin{eqnarray*}
 \frac{|A|}{|S^n([m])|} &\ge& \frac{{m+n-c-2\choose m-1}}{{m+n-1\choose m-1}}\\
			&=& \frac{(m+n-c-2)\dots(n-c)}{(m+n-1)\dots(n+1)}\\
			&\ge& \left(\frac{n-c}{n+1}\right)^{m-1}
 \end{eqnarray*}
The last inequality comes from the fact that $\frac{a}{b} \ge \frac{a-1}{b-1}, \forall a,b \in \mathbb{N}, 0<a<b$. \qed
\end{proof}
The above bound immediately proves that plurality rule under ISC assumption is asymptotically strategy-proof and 
 asymptotically $o(n)$-collusion-proof on the number of voters.
The above results imply that in elections where the number of voters, is large the problem of manipulation is 
not severe. 
\begin{theorem}\label{pluralityEBound}
 For Plurality rule, for $m>n$, $\frac{|E^n([m])|}{|S^n([m])|} \ge \left(\frac{m-n+1}{m}\right)^n$.
\end{theorem}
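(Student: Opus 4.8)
The plan is to evaluate $|E^n([m])|$ and $|S^n([m])|$ in closed form and then bound their ratio factor by factor. First I would use the hypothesis $m>n$ to identify $E^n([m])$ exactly. In a Plurality election the scores satisfy $\sum_{i=1}^m x_i = n$, so the minimum score is at most $n/m<1$ and hence equals $0$; since all scores differ pairwise by at most one, every $x_i$ then lies in $\{0,1\}$, and exactly $n$ of them equal $1$. Conversely every such $0$--$1$ vector is a genuine scoring profile, because $m>n$ lets $n$ voters put $n$ distinct candidates on top. Therefore $|E^n([m])| = \binom{m}{n}$. For the denominator I would invoke Lemma~\ref{pluralityscorecard}, which gives $|S^n([m])| = \binom{n+m-1}{m-1} = \binom{n+m-1}{n}$.

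Next I would rewrite the ratio as a product of $n$ fractions:
\[
\frac{|E^n([m])|}{|S^n([m])|} = \frac{\binom{m}{n}}{\binom{n+m-1}{n}} = \frac{m(m-1)\cdots(m-n+1)}{(m+n-1)(m+n-2)\cdots m} = \prod_{i=1}^{n} \frac{m-n+i}{m+i-1}.
\]
The key step is the claim that each factor already dominates the target base, i.e.\ $\frac{m-n+i}{m+i-1} \ge \frac{m-n+1}{m}$ for every $i\in\{1,\dots,n\}$. Clearing denominators (both are positive) this becomes $m(m-n+i) \ge (m-n+1)(m+i-1)$, and expanding both sides the inequality collapses to $(n-1)(i-1)\ge 0$, which holds since $n\ge 1$ and $i\ge 1$. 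Multiplying the $n$ resulting inequalities gives $\frac{|E^n([m])|}{|S^n([m])|} \ge \left(\frac{m-n+1}{m}\right)^n$, as wanted.

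The only place needing care is the first step: the clean identity $|E^n([m])| = \binom{m}{n}$ genuinely relies on $m>n$ (for $m<n$, profiles whose scores are all $1$ or larger also belong to $E^n([m])$ and the count changes), so I would make sure the hypothesis is used exactly there. A minor bookkeeping point in the factorization is lining up the two descending products so that each has precisely $n$ terms; once that is done, the per-factor comparison $(n-1)(i-1)\ge 0$ is the entire content of the argument and is elementary, so I do not expect any real obstacle beyond spotting that the naive term-by-term bound only works after writing the ratio in the form above.
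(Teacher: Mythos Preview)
Your proposal is correct and follows essentially the same route as the paper: compute $|E^n([m])|=\binom{m}{n}$ (using $m>n$), take $|S^n([m])|=\binom{m+n-1}{n}$ from Lemma~\ref{pluralityscorecard}, write the ratio as the product $\prod_{i=1}^{n}\frac{m-n+i}{m+i-1}$, and bound each factor below by $\frac{m-n+1}{m}$. The paper justifies the last step via the monotonicity $\frac{a}{b}\ge\frac{a-1}{b-1}$ for $0<a<b$, whereas you clear denominators and reduce to $(n-1)(i-1)\ge 0$; these are equivalent, and your explicit justification of $|E^n([m])|=\binom{m}{n}$ is a small addition the paper leaves implicit.
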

 Hence plurality rule under ISC assumption is asymptotically strategy-proof on candidates for 
 lexicographic tie breaking rule for the manipulator, but not asymptotically strategy-proof on
 candidates with lexicographic tie breaking rule against manipulator. Plurality is not 
 $c$-collusion-proof for $c>1$ on candidates.

\subsection{$k$-Approval Voting Rule}

$k$-Approval voting rule is the scoring rule with the score vector $(1,\dots,1,0,\dots,0)$, the first $k$ entries are 1 and rest are 0. 
The scoring profiles are given by,

$S^n([m]):=\{(x_1,\dots,x_m)\in\mathbb{N}^m:\sum_{i=1}^m x_i=nk, 0\le x_i\le n, \forall i \in [m]\}$\\
$|S^n([m])|= \text{Coefficient of }x^{nk} \text{ in }(1+x+x^2+\dots+x^n)^m = 
\sum_{i=0}^k (-1)^i{n(k-i)+m-1\choose m-1}$
\begin{observation}\label{mankapproval}
 For k-Approval rule with $1<k<m$, following and only following scoring profiles $(x_1,x_2,\dots,x_m)\in S^n([m])$ are strategy-proof:
 \begin{enumerate}
  \item $\exists w\in\{1,2,\dots,m\}\text{ such that } x_w-x_i\ge2,\forall 1\le i\le m,i\neq w,$ that is $c_w$ 
  is the winner and its score is at least two more than every other candidates.
  \item $x_1=x_2=\dots=x_m$ with lexicographic tie breaking rule for the manipulator.
 \end{enumerate}
\end{observation}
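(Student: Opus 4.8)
The plan is to prove the two inclusions separately: every scoring profile listed in items~1 and~2 is strategy-proof, and every other one is manipulable.

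For the easy direction I would argue as follows. If the profile is of type~1, with $c_w$ satisfying $x_w-x_i\ge 2$ for all $i\ne w$, then a single extra $k$-Approval ballot raises the scores of exactly $k$ candidates by $1$ and leaves the others fixed, so afterwards $c_w$ has score at least $x_w$ while every other candidate has score at most $x_i+1\le x_w-1$; hence $c_w$ wins regardless of the new voter's ballot, and with an outcome that does not depend on her report the profile is trivially strategy-proof. If the profile is the all-equal one $x_1=\dots=x_m=t$ under tie-breaking for the manipulator, then reporting truthfully the new voter approves her $k$ most-preferred candidates, each of which reaches the new maximum $t+1$ while the rest stay at $t$; the tie among those $k$ is resolved in her favour, electing her overall favourite, and no ballot can beat that. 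So these profiles are strategy-proof.

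For the converse I would suppose the profile is neither of type~1 nor the all-equal one and exhibit a manipulation. Put $M=\max_i x_i$. Not being of type~1 means either two candidates attain $M$ or a unique $c_w$ attains $M$ with some $c_u$ having $x_u=M-1$; not being all-equal gives genuine spread. In the latter sub-case I would take the new voter's true preference $\pi$ to rank $c_u$ first, $c_w$ second, and the lowest-scoring candidates next: truthful voting approves $\{c_u,c_w\}$ plus $k-2$ low candidates and makes $c_w$ win at score $M+1$, whereas approving $c_u$ plus $k-1$ low candidates (available since $k<m$) leaves $c_w$ at $M$, lifts $c_u$ to $M$, keeps every other approved candidate below $M$, and produces a tie whose winner under the operative tie-break is $c_u\succ_\pi c_w$ --- a strict gain, so the profile is manipulable. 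The sub-case where several candidates share the maximum is handled the same way: the voter declines approval to one co-leader while boosting another near-top candidate to engineer a tie she can swing.

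The hard part will be this converse direction: I expect the work to be in the case analysis --- how many candidates sit at and just below the top score, and which of the two tie-breaking rules is in force --- and in each case in choosing the ``filler'' candidates of the manipulating ballot carefully enough that the truthful and manipulated winners really differ and are ordered the right way by $\pi$. The individual inequalities are trivial; the bookkeeping across the cases is the crux.
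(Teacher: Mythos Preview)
Your easy direction is correct and mirrors the paper's: type-1 profiles are non-pivotal, and in the all-equal profile under ``for'' tie-breaking truthful voting already elects the favourite.

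For the converse the paper takes a different route. It fixes the true preference to list \emph{all} maximum-score candidates first, $c_1\succ\cdots\succ c_w\succ c_{w+1}\succ\cdots\succ c_m$, and asserts in one line that this voter ``can clearly manipulate.'' Your construction instead puts a runner-up $c_u$ (score $M-1$) on top, the leader $c_w$ second, then low-scoring fillers, and manipulates by dropping $c_w$ from the approved set. Your version is the one that actually does the job in the unique-winner sub-case: with the paper's leaders-first preference, truthful voting already elects $c_1$, so there is nothing to improve on and no manipulation is exhibited. On the other hand, your sketch of the multi-leader sub-case (``decline approval to one co-leader \ldots\ engineer a tie she can swing'') is thin, and this is where the paper's leaders-first preference is implicitly aimed. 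Two concrete points to watch as you fill it in: first, your claim ``keeps every other approved candidate below $M$'' can fail when there is no candidate with score $\le M-2$; under the ``for'' tie-break this is harmless because $c_u$ still wins the enlarged tie, but you should say so rather than assert the stronger inequality. Second, when $|W|>m-k$ the manipulator \emph{must} approve at least one leader, who then jumps to $M+1$ and defeats any tie at level $M$ you try to manufacture; so the ``same way'' you allude to does not go through unchanged, and this sub-case needs its own argument rather than a pointer back to the unique-winner construction.
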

For $k$-Approval voting rule with $1<k<m$, we have the following results.
\begin{theorem}\label{kApprovalFBound}
 For $k$-Approval voting rule, $\frac{|F^n([m])|}{|T^n([m])|} \ge \left(\frac{m-nk+1}{m}\right)^{nk}$.
\end{theorem}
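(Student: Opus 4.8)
The plan is to reduce the ratio to a count of $k$-element approval sets. Under $k$-Approval the scoring profile of a voting profile $\succ^n$ depends only on the top-$k$ set $A_i$ of each voter $v_i$, and for a fixed tuple $(A_1,\dots,A_n)$ the number of voting profiles realising it is exactly $(k!(m-k)!)^n$, since each voter may order its $k$ approved and its $m-k$ unapproved candidates arbitrarily. Because this multiplicity is the same for every scoring profile, it cancels, so $|F^n([m])|/|T^n([m])|$ equals the ratio of the number of tuples $(A_1,\dots,A_n)$ whose indicator sum lies in $E^n([m])$ to the number whose indicator sum is a strategy-proof scoring profile.

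For the denominator I would use only the trivial bound $|T^n([m])|\le |\mathcal{L}([m])^n| = (m!)^n = \binom{m}{k}^n (k!(m-k)!)^n$; equivalently, there are at most $\binom{m}{k}^n$ relevant approval-set tuples. For the numerator, in the regime of interest $m>nk$ the scores of a $k$-Approval profile sum to $nk<m$, so a scoring profile has all its coordinates within $1$ of each other exactly when every coordinate is $0$ or $1$, which holds precisely when $A_1,\dots,A_n$ are pairwise disjoint. Counting these tuples greedily ($\binom{m}{k}$ choices for $A_1$, then $\binom{m-k}{k}$, \dots, then $\binom{m-(n-1)k}{k}$) gives $|F^n([m])| = (k!(m-k)!)^n\prod_{j=0}^{n-1}\binom{m-jk}{k}$. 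Cancelling the common multiplicity and dropping the $j=0$ factor,
\[\frac{|F^n([m])|}{|T^n([m])|}\ \ge\ \frac{\prod_{j=0}^{n-1}\binom{m-jk}{k}}{\binom{m}{k}^n}\ =\ \prod_{j=1}^{n-1}\prod_{i=0}^{k-1}\frac{m-jk-i}{m-i}.\]

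What remains is a routine estimate. The double product has $(n-1)k$ factors, each of the form $\frac{m-jk-i}{m-i}$ with $1\le j\le n-1$ and $0\le i\le k-1$; such a fraction is decreasing in both $j$ and $i$, so its minimum is attained at $j=n-1,\ i=k-1$ and equals $\frac{m-nk+1}{m-k+1}\ge\frac{m-nk+1}{m}$. Hence the product is at least $\left(\frac{m-nk+1}{m}\right)^{(n-1)k}$, and since $0<\frac{m-nk+1}{m}<1$ and $(n-1)k\le nk$, this is in turn at least $\left(\frac{m-nk+1}{m}\right)^{nk}$, which is the claimed bound.

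The only point that genuinely needs care is the exact identification of $E^n([m])$ with pairwise disjointness of the approval sets, together with the observation that the realisation multiplicity $(k!(m-k)!)^n$ is scoring-profile independent and therefore cancels; everything else is elementary binomial arithmetic. (When $m\le nk$ the stated bound is vacuous or non-positive, and the balanced profiles would instead involve the two values $\lfloor nk/m\rfloor$ and $\lceil nk/m\rceil$, a case outside the asymptotic regime under consideration.)
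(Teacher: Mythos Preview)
Your proof is correct and essentially the same as the paper's. The paper organises the count of $|F^n([m])|$ by scoring profile (each of the $\binom{m}{nk}$ profiles in $E^n([m])$ contributes $(nk)!\bigl((m-k)!\bigr)^n$ voting profiles) rather than by approval-set tuple, takes $|T^n([m])|=(m!)^n$, and arrives at the identical ratio $\dfrac{m(m-1)\cdots(m-nk+1)}{\bigl(m(m-1)\cdots(m-k+1)\bigr)^n}$, which it bounds below by $\left(\dfrac{m-nk+1}{m}\right)^{nk}$ via the same crude term-by-term inequality.
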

\begin{proof}
We have,
\begin{eqnarray*}
 |F^n([m])| &=& |E^n([m])|(nk)!\left(\left(m-k\right)!\right)^n\\ 
	    &=& {m \choose nk}(nk)!\left(\left(m-k\right)!\right)^n\\ 
	    &=& \frac{m!\left(\left(m-k\right)!\right)^n}{\left(m-nk\right)!}
\end{eqnarray*}
The first equality comes from the fact that, for each scoring profile in $E^n([m])$, there are 
$(nk)!\left(\left(m-k\right)!\right)^n$ many voting profiles which gives that scoring profile.
\begin{eqnarray*}
 \frac{|F^n([m])|}{|T^n([m])|}  &=&   \frac{m!}{\left(m-nk\right)!\left(m(m-1)\dots(m-k+1)\right)^n} \\
				&=&   \frac{m(m-1)\dots(m-nk+1)}{\left(m-nk\right)!\left(m(m-1)\dots(m-k+1)\right)^n} \\
				&\ge& \left(\frac{m-nk+1}{m}\right)^{nk}
\end{eqnarray*} \qed
\end{proof}
 Hence plurality rule under IC assumption is asymptotically strategy-proof on candidates for 
 lexicographic tie breaking rule for the manipulator, but asymptotically manipulable on
 candidates with lexicographic tie breaking rule against manipulator. Plurality is not 
 $c$-collusion-proof for $c>1$ on candidates.
 For $k>1, k=o(m)$, k-Approval is not asymptotically strategy-proof on number of candidates.
\begin{observation}\label{mankapprovalcoalition}
 For k-Approval rule with $1<k<m$, following scoring profiles $(x_1,x_2,\dots,x_m)\in S^n([m])$ are c-collusion-proof:
 \begin{enumerate}
  \item $\exists w\in\{1,2,\dots,m\}\text{ such that } x_w-x_i\ge2c,\forall 1\le i\le m,i\neq w,$ that is $c_w$ 
  is the winner and its score is at least two more than every other candidates.
 \end{enumerate}
\end{observation}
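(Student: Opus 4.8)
The plan is to prove the sufficiency claim that is asserted: whenever some candidate $c_w$ leads every other candidate by a score gap of at least $2c$ in the truthful scoring profile $(x_1,\dots,x_m)\in S^n([m])$, the underlying voting profile $\succ^n$ is $c$-collusion-proof. The key point is that such a gap makes $c_w$ the winner of the election \emph{regardless of how the $c$ newcomers vote}; once the outcome is shown to be constant over every report of the coalition, the condition in Definition~\ref{newdefCollusionProofness} is satisfied for free.

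First I would invoke the structure of $k$-Approval ballots: a single vote adds exactly $1$ to each of the $k$ candidates it approves and $0$ to the remaining $m-k$ candidates, so it never lowers an existing score and never adds more than $1$ to any one candidate. Hence after the $c$ new voters cast \emph{any} ballots, truthful or manipulative, the score of candidate $c_j$ moves from $x_j$ to $x_j+a_j$ with $0\le a_j\le c$ for every $j$ (only this per-candidate bound is needed). In particular the final score of $c_w$ is at least $x_w$, while for every $i\neq w$ the final score of $c_i$ is at most $x_i+c$. Combining with the hypothesis $x_w-x_i\ge 2c$, i.e. $x_i\le x_w-2c$, the final score of $c_i$ is at most $x_w-2c+c = x_w-c < x_w$ (using $c\ge 1$), strictly below $c_w$'s final score. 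So $c_w$ is the \emph{unique} score-maximiser in the resulting $(n+c)$-voter election and is therefore the winner under either tie-breaking rule considered in the paper.

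Consequently, for every pair $\succ_1^c,\succ_2^c\in\mathcal{L(C)}^c$ we have $r(\succ^n,\succ_1^c)=c_w=r(\succ^n,\succ_2^c)$, so the defining relation $r(\succ^n,\succ_1^c)\succ_1^c r(\succ^n,\succ_2^c)$ holds trivially (reading, as is standard, each coalition member's preference as a weak preference, under which a candidate is weakly preferred to itself; the literal strict reading would make no profile with a fixed winner collusion-proof, which is clearly not the intent). Hence $\succ^n$ is $c$-collusion-proof, proving the claim. There is essentially no obstacle: the only thing to check is that $1<k<m$ is exactly what lets us treat $a_j\le c$ as the best the coalition can achieve against $c_w$, since a voter can neither decrease a score nor approve a candidate twice; and one notices that the stated margin $2c$ is more generous than necessary --- a margin of $c+1$ already keeps $c_w$ strictly on top by the same computation, matching the tight plurality threshold of Proposition~\ref{manpluralcoalitional}.
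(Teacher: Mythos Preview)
Your proof is correct and follows the same approach as the paper, which simply asserts in one line that ``the new $c$ voters cannot change the outcome of the election and hence the profiles are $c$-collusion-proof.'' Your version spells out the bound $0\le a_j\le c$ and the resulting strict inequality explicitly, and your closing remark that a gap of $c+1$ already suffices is a correct sharpening the paper does not mention.
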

\begin{theorem}\label{kapprovalAsymptoticSCn}
 For $1<k<m$, $k$-approval is asymptotically $o(n)$-collusion-proof on the number of voters under ISC assumption.
\end{theorem}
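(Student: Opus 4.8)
The plan is to bound directly the fraction of scoring profiles that are \emph{not} $c$-collusion-proof. By Proposition~\ref{mankapprovalcoalition}, any scoring profile whose largest coordinate exceeds its second-largest coordinate by at least $2c$ is $c$-collusion-proof; so, writing $x_{(1)}\ge x_{(2)}$ for the two largest entries of $x\in S^n([m])$, it suffices to prove that, for every fixed $m$ with $1<k<m$ and every $c=c(n)=o(n)$,
\[
\frac{\bigl|\{x\in S^n([m]):x_{(1)}-x_{(2)}<2c\}\bigr|}{|S^n([m])|}\;\longrightarrow\;0\qquad(n\to\infty).
\]

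For the numerator I would count the relevant profiles by specifying, in order: the lexicographically first argmax $w$ (at most $m$ choices); its score $t:=x_w$ (at most $n+1$ choices); a coordinate $j\neq w$ together with a value $x_j\in\{t-2c+1,\dots,t\}$ for it (at most $(m-1)\cdot 2c$ choices) --- such a pair must exist because $x_{(2)}\in\{t-2c+1,\dots,t\}$ --- and finally the remaining $m-2$ coordinates, which are nonnegative and sum to $nk-t-x_j\le mn$, contributing at most $\binom{mn+m-3}{m-3}=O_m(n^{m-3})$ choices. Multiplying, the numerator is $O_m\!\left(c\,n^{m-2}\right)$; morally, restricting the top gap to a window of width $2c$ removes one ``degree of freedom''.

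For the denominator I would observe that $S^n([m])$ is the set of integer points of the $n$-th dilate of the polytope $P=\{y\in\mathbb R^m:\sum_i y_i=k,\ 0\le y_i\le 1\}$, which is $(m-1)$-dimensional precisely because $0<k<m$; hence $|S^n([m])|=\Theta_m(n^{m-1})$, either by Ehrhart's theorem or by reading off the degree-$(m-1)$ term of the closed form $|S^n([m])|=\sum_{i=0}^{k}(-1)^i\binom{n(k-i)+m-1}{m-1}$ already recorded above. Combining the two estimates, the displayed ratio is $O_m(c/n)$, which tends to $0$ since $c=o(n)$. Thus the probability that a uniformly random scoring profile is $c$-collusion-proof tends to $1$; as $m$ (equivalently $\mathcal C$) is an arbitrary fixed finite set and $c(n)$ an arbitrary $o(n)$ function, this is exactly the statement that $k$-approval is asymptotically $o(n)$-collusion-proof on the number of voters under ISC.

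The one delicate point is the numerator estimate: it must be organised so that, after the $O(n)$ choices for the winner's score and the $O(c)$ choices that trap the runner-up inside the gap window, what is left really is an $(m-2)$-dimensional bounded-composition count, uniformly over all earlier choices. Once this is in place, the $\Theta(n)$-gap between numerator and denominator, and hence the conclusion for every $c=o(n)$, follows at once.
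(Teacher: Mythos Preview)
Your argument is correct and takes a genuinely different route from the paper's. The paper lower-bounds $|A|$ directly via an injection $S^{n-lc}([m])\hookrightarrow A$ (with $l=\binom{m-1}{k-1}$): from any scoring profile on $n-lc$ voters one manufactures a profile on $n$ voters by boosting the current winner's score by a multiple of $l$ and every other score by the corresponding multiple of $l'=\binom{m-2}{k-2}$, which opens the required gap; the ratio $|S^{n-lc}([m])|/|S^n([m])|$ is then shown to tend to $1$ from the closed inclusion--exclusion formula, using only that $l$ is independent of $n$ and that $c=o(n)$. You instead upper-bound the \emph{complement} $\{x:x_{(1)}-x_{(2)}<2c\}$ by an elementary overcount, and obtain $|S^n([m])|=\Theta_m(n^{m-1})$ from Ehrhart's theorem for the hypersimplex $\Delta_{m,k}$. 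Your route yields an explicit convergence rate $O_m(c/n)$ and is insensitive to the exact closed form for $|S^n([m])|$, whose alternating signs make the limit step in the paper's argument somewhat delicate; the paper's route is more constructive --- the injection exhibits a concrete family of collusion-proof profiles --- and stays within elementary enumeration, avoiding any appeal to lattice-point theory.
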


\subsubsection*{Veto Rule : }

Veto voting rule is $(m-1)$-Approval voting rule. Now since scoring rules are invariant under affine transformations,
we can consider veto as a scoring rule with score vector being $(0,0,\dots,0,-1)$. Clearly for veto,
\[S^n([m])=\{(x_1,x_2,\dots,x_m)\in\mathbb{Z}_{\le0}^m:\sum_{i=1}^m x_i=-n\}\]
Hence $|S^n_{veto}([m])|=|S^n_{plurality}([m])|={n+m-1\choose n}$.
Theorem\nobreakspace \ref {kapprovalAsymptoticSCn} implies that 
veto is asymptotically $o(n)$-collusion-proof on the number of voters under ISC assumption.
\begin{observation}\label{manveto}
 For veto rule following scoring profiles $(x_1,x_2,\dots,x_m)\in S^n([m])$ are $c$-collusion-proof:
 \begin{enumerate}
  \item $\exists w\in\{1,2,\dots,m\}\text{ such that } x_w-x_i\ge c+1,\forall 1\le i\le m$ $i\neq w$ , i.e. $c_w$ is 
  the winner and its score is at least two more than every other candidates.
  \item For $c<m-1,$ $x_1=x_2=\dots=x_m$.
 \end{enumerate}
\end{observation}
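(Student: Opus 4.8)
The plan is to handle the two families of scoring profiles in turn. Throughout I use the standard normalisation of veto as the positional scoring rule with vector $(0,\dots,0,-1)$: one ballot lowers by exactly $1$ the score of the candidate it ranks last and changes no other score, so a coalition of $c$ new voters removes a total of $c$ from the scores, affects the scores of at most $c$ candidates, and never raises any candidate's score.

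For the first family, where $x_w-x_i\ge c+1$ for all $i\ne w$, the approach is to show the coalition cannot change the winner at all. After any $c$ extra ballots, $c_w$ still has score at least $x_w-c$, while every other $c_i$ has score at most $x_i\le x_w-c-1<x_w-c$; hence $c_w$ is the unique score-maximiser, and therefore the winner, whatever the coalition reports and regardless of the tie-breaking rule. So $r(\succ^n,\succ_1^c)=r(\succ^n,\succ_2^c)=c_w$ for every $\succ_1^c,\succ_2^c\in\mathcal{L(C)}^c$, and the profile is $c$-collusion-proof trivially.

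For the second family, where all $x_i$ are equal and $c<m-1$, I would argue from two facts. First, since at most $c<m-1$ candidates have their score lowered, at least $m-c\ge 2$ candidates are untouched and therefore retain the maximum score, so after any report of the coalition the set $W$ of score-maximisers has $|W|\ge 2$ and the winner is a tie-break choice inside $W$. Second, under truthful reporting each coalition member vetoes exactly its own last-ranked candidate. The core step is then to show that no coalitional deviation can help the coalition: a member who vetoes anything other than its true last choice leaves that candidate un-vetoed by its own ballot, and, because $|W|\ge 2$ and ties are broken against the coalition (the convention already used for the all-equal plurality profile, Proposition~\ref{manplural}), the adversarial tie-break can be driven onto a candidate this member ranks no higher than the one it receives under truthful reporting. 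Since violating $c$-collusion-proofness requires a deviation that the coalition actually prefers, this would finish the proof. The case $c=1$ is already clean: deviating leaves the manipulator's last-ranked candidate uneliminated, the adversarial tie-break elects precisely that candidate, and this is strictly worse for the manipulator than the second-to-last candidate elected under truthful reporting.

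The main obstacle is the core step when $c>1$. One must rule out that the gain a member derives from the \emph{other} members' deviations outweighs the loss it suffers by failing to veto its own worst candidate, and doing so forces one to pin down what ``breaking a tie against the coalition'' means when the members rank the tied candidates differently --- a notion that is unambiguous for a single manipulator but not for several. A plausible route is to observe that, as far as the resulting set $W$ is concerned, any report of the coalition is equivalent to deleting at most $c$ candidates from $\mathcal{C}$, and then to reduce to the single-manipulator argument applied to whichever member fares worst under the worst-case resolution of the remaining tie; making that reduction precise, and choosing the right coalitional tie-break, is where the real work lies.
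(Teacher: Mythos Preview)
For item~1 your argument is correct and is precisely the paper's: the profile is non-pivotal, the coalition cannot displace $c_w$, and hence there is nothing to manipulate.

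For item~2 you and the paper diverge. The paper's entire argument is one sentence: ``In the second case, the situation is same as with one voter and thus non-manipulable.'' That is, with all scores equal the truthful voters' contribution is uniform, so the coalition is effectively voting on a blank slate, and the paper asserts (by analogy with the single-voter case) that this is non-manipulable. You instead attempt a direct analysis under tie-breaking \emph{against} the coalition, drawing the analogy from case~1(b) of Proposition~\ref{manplural}, and you are right that this runs into trouble for $c>1$.

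In fact the route you chose cannot be completed: under adversarial tie-breaking, item~2 is \emph{false} for $c>1$. Take $m=4$, $c=2$, all $x_i$ equal, both coalition members with true preference $d\succ c\succ b\succ a$, and tie-break order $a\succ_t b\succ_t c\succ_t d$. Truthful voting vetoes $a$ twice, leaving $\{b,c,d\}$ tied and electing $b$; but if one member vetoes $a$ and the other vetoes $b$, the tie is $\{c,d\}$ and $c$ is elected, which both members strictly prefer. So the obstacle you flag is not a technicality to be patched --- it is fatal under the convention you selected. Item~2 can only hold under tie-breaking \emph{for} the manipulators (the convention used in the all-equal clause of Proposition~\ref{mankapproval}), where truthful voting already delivers a unanimous coalition its top candidate and no deviation can improve on that. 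The paper's one-line reduction presumably rests on this convention, though neither the convention nor the coalitional meaning of ``for the manipulators'' is made explicit there; your instinct that something is under-specified for $c>1$ is well founded, but the resolution is to switch the tie-breaking convention rather than to push the adversarial argument further.
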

\begin{lemma}
 For Veto rule, for $m>n$, $\frac{|E^n([m])|}{|S^n([m])|} \ge \left(\frac{m-n+1}{m}\right)^n$.
\end{lemma}
 Hence veto rule under ISC assumption is not asymptotically strategy-proof on candidates.
\begin{lemma}
 For Veto rule, for $m>n$, $\frac{|F^n([m])|}{|S^n([m])|} \ge \left(\frac{m-n+1}{m}\right)^n$.
\end{lemma}
 The above lemma shows that veto rule under IC assumption is not asymptotically strategy-proof on candidates.
The following theorem bounds the number of $c$-collusion-proof strategy profiles for veto voting rule which 
will subsequently help us to prove $o(n)$-collusion-proofness of veto voting rule. 
\begin{theorem}
 Let A be the set of all $c$-collusion-proof Veto scoring profiles. 
 Then $\frac{|A|}{|S^n([m])|} \ge \left(\frac{n-(c+1)(m-1)+1}{mn-(c+1)(m-1)}\right)^{(c+1)(m-1)}$.
\end{theorem}
 So under ISC , Veto voting rule is asymptotically $o(n)$-collusion-proof on the number of voters.

\subsection{Borda Voting Rule}
For Borda voting rule, a score vector is $(m-1,m-2,\dots,0)$.
\begin{observation}\label{manborda}
 For Borda rule, following and only following scoring profiles $(x_1,x_2,\dots,x_m)\in S^n([m])$ are strategy-proof:
 \begin{enumerate}
  \item $\exists w\in\{1,2,\dots,m\}\text{ such that } x_w-x_i\ge m,\forall 1\le i\le m,i\neq w,$ that is $c_w$ is the winner and its score is at 
  least m greater than that of every other candidates.
  \item $x_1=x_2=\dots=x_m$ \label{bordaequal}
  \item Every candidate except one is tied with highest score and the loser's score is one less. \label{bordatied}
 \end{enumerate}
\end{observation}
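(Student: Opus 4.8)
The plan is to argue at the level of scoring profiles, using the fact that for a score-based rule one new voter simply adds the vector $(m-1,m-2,\dots,1,0)$, permuted according to her reported ranking, to the current score vector $(x_1,\dots,x_m)$. Call a candidate \emph{reachable} if the new voter can make it the (tie-broken) winner by some report, and let $R$ be the set of reachable candidates; note that every candidate attaining the current maximum score lies in $R$ (rank it first, so it leads by at least one point). Unwinding Definition~\ref{newdefCollusionProofness} for $c=1$, the profile is strategy-proof precisely when, for every true preference $\succ$ of the new voter, voting truthfully already elects the $\succ$-highest member of $R$; in particular no deviation can help her. I would use this reformulation throughout, treating the lexicographic-for-manipulator and lexicographic-against-manipulator conventions in parallel.

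For the ``if'' direction I would check the three families in turn. In family~1, $x_w-x_i\ge m$ for all $i\neq w$ forces $w$'s post-ballot score to be $\ge x_w$ while every other candidate's is $\le x_i+(m-1)\le x_w-1$; so $w$ is always the unique winner, the outcome is fixed, and the profile is strategy-proof. In family~2, all scores equal $s$, and a truthful ballot gives the voter's top candidate score $s+m-1$, strictly above everyone else's $\le s+m-2$, so she gets her favourite outright, which is the best conceivable outcome. In family~3, let $s$ be the common top score and $\ell$ the unique candidate with score $s-1$: if her top choice is one of the $m-1$ high candidates, a truthful ballot elects it outright; if her top choice is $\ell$, a truthful ballot puts $\ell$ and exactly her second choice $b$ (necessarily a high candidate) at $s+m-2$ with everyone else at $\le s+m-3$. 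Since $\ell$ can never be made a \emph{unique} winner (every ballot either ties $\ell$ with some high candidate or hands a high candidate $s+m-1$), the tie-breaking rule decides between $\ell$ and $b$, and in both conventions this is exactly $\max_\succ R$: if ties favour the manipulator then $\ell\in R$ and truthful voting elects $\ell$, while if ties go against her then $\ell\notin R$, so $\max_\succ R=b$ and truthful voting elects $b$. Either way there is no profitable deviation.

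For the ``only if'' direction the plan is a case analysis on the sorted score vector $x_{(1)}\ge x_{(2)}\ge\dots\ge x_{(m)}$, aiming to show that a strategy-proof profile outside families~1 and~2 must be of family~3. The engine is the reformulation above: for any candidate $c$ that can be shown reachable, plugging in a true preference that ranks $c$ first but places the remaining high-scoring candidates immediately below it forces $x_c$ to be within a small additive margin of those scores, since otherwise a truthful ballot from that preference would elect a candidate strictly worse (for her) than the reachable $c$; meanwhile a non-reachable candidate can simply be dropped from consideration. Iterating this — first locating which candidates near the top are reachable, then using the forced near-equalities to pin the ``near-tie'' configurations down and invoking the tie-breaking convention to discard all but the all-equal and the $(m-1)$-tied shapes — is the intended route. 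The main obstacle is exactly this necessity direction: it is a delicate bookkeeping argument because the set $R$, the identity of $\max_\succ R$, and the behaviour of the tie-breaking rule all interact, and one has to choose the ``worst'' true preference (the one ranking the dangerous high-scoring rivals directly below the manipulator's top choice) with care; I expect the cleanest presentation phrases every inequality in terms of ``(second-highest score) minus (a given candidate's score)'' and carries the two lexicographic conventions together, flagging the single place — the lone loser of family~3 — where they diverge.
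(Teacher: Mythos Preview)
Your reachable-set reformulation is correct and your treatment of the three families is more careful than the paper's, which dispatches each in a single sentence (non-pivotal; same as the one-voter case; truthful is always best response). In particular, your analysis of family~3 under both tie-breaking conventions is more thorough than anything the paper writes down.

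The real divergence is in the ``only if'' direction. You sketch an iterative scheme that infers near-equalities among high scorers by repeatedly choosing adversarial true preferences and tracking how $R$ and $\max_\succ R$ evolve, and you rightly flag this as the delicate part. The paper sidesteps all of this: it splits on whether the current maximum is attained uniquely, and in each branch simply writes down one explicit true preference together with one explicit deviating ballot that beats it. When there is a unique leader $c_w$ with some runner-up $c_r$ within $m-1$ of it, a voter whose true preference begins $c_r\succ c_w\succ\cdots$ sees $c_w$ win truthfully but can cast $c_r\succ\cdots\succ c_w$ to elect $c_r$. When the top tier $\{c_1,\dots,c_i\}$ has $1<i<m$, a voter whose favourite lies outside that tier demotes the entire tier to the tail of her ballot; the subcase $i=m-1$ with the loser more than one point behind is handled in one further line. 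So rather than \emph{deducing} the structure of a strategy-proof profile, the paper just \emph{exhibits} a manipulation in every profile outside families~1--3. That route is shorter and more concrete, though somewhat hand-wavy at the boundary cases; your $R$-framework would make those edges precise, at the cost of exactly the bookkeeping you anticipate.
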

Before going to the asymptotic result for the Borda voting rule, let us state some lemmas which we will be using later.
\begin{lemma}\label{ex}
 $\sum_{i=2}^{\infty} (-1)^i (i-1) \frac{x^i}{i!} = 1 - (1+x)e^{-x}, \forall x\in \mathbb{R}$.
\end{lemma}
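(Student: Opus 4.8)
The plan is to read the left-hand side as a regrouping of the Taylor series of $(1+x)e^{-x}$ about $0$. Since the exponential series converges absolutely on all of $\mathbb{R}$, every term-by-term manipulation below is legitimate for every real $x$, so there are no convergence subtleties to worry about.

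First I would write $(1+x)e^{-x}=e^{-x}+xe^{-x}$ and expand each summand using $e^{-x}=\sum_{i\ge0}(-1)^i\frac{x^i}{i!}$. This gives $e^{-x}=\sum_{i\ge0}(-1)^i\frac{x^i}{i!}$ and, after the reindexing $i\mapsto i+1$, $xe^{-x}=\sum_{i\ge1}(-1)^{i-1}\frac{x^i}{(i-1)!}$. Adding these and collecting the coefficient of $x^i$ for $i\ge1$ yields $\frac{(-1)^i}{i!}+\frac{(-1)^{i-1}}{(i-1)!}=(-1)^i\left(\frac{1}{i!}-\frac{1}{(i-1)!}\right)=-(-1)^i\frac{i-1}{i!}$, while the constant term ($i=0$) is $1$.

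Next I would observe that the $i=1$ coefficient equals $0$ (the factor $i-1$ vanishes), so the summation may be started at $i=2$, giving $(1+x)e^{-x}=1-\sum_{i=2}^{\infty}(-1)^i(i-1)\frac{x^i}{i!}$; rearranging this is exactly the claimed identity. If one prefers to avoid regrouping two series, the same computation can be packaged via the single elementary identity $\frac{i-1}{i!}=\frac{1}{(i-1)!}-\frac{1}{i!}$, which splits the target sum into $\sum_{i\ge2}(-1)^i\frac{x^i}{(i-1)!}=x-xe^{-x}$ and $\sum_{i\ge2}(-1)^i\frac{x^i}{i!}=e^{-x}-1+x$, whose difference is $1-(1+x)e^{-x}$.

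There is essentially no obstacle here: the only point that deserves a word is the legitimacy of rearranging/regrouping the series, and that is immediate from absolute convergence of the exponential series on all of $\mathbb{R}$. As an alternative one could set $h(x)$ equal to the difference of the two sides, check $h(0)=0$, differentiate term by term (again justified by uniform convergence on compact sets) to get $h'(x)=0$, and conclude $h\equiv0$; but the direct coefficient comparison above is shorter and self-contained.
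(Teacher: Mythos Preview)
Your proof is correct and follows essentially the same idea as the paper's: both arguments obtain the identity by elementary manipulation of the Taylor series of $e^{-x}$, the only cosmetic difference being that the paper reaches the series for $-xe^{-x}$ by differentiating $e^{-x}$ term by term and then multiplying by $x$, whereas you multiply the series directly by $(1+x)$ and read off coefficients. Your remark on absolute convergence justifying the rearrangements is a welcome addition that the paper leaves implicit.
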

\begin{lemma}\label{sum}
 $\forall l \in \mathbb{N}$
 $1 - (-1)^l (l-1) = \sum_{i=2}^{l-1} (-1)^i (i-1) {l \choose i}$.
\end{lemma}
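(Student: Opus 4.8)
The plan is to derive this identity from two of the most standard alternating binomial sums, after which the partial range $2\le i\le l-1$ is dealt with by accounting for the three missing index values $i=0,1,l$. Throughout I read the statement for $l\ge 2$, which is the only range in which it is non-degenerate (for $l\le 1$ the right-hand sum is empty), so the tiny cases are checked separately or simply excluded.

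First I would record that for every integer $l\ge 2$,
\[
\sum_{i=0}^{l}(-1)^i\binom{l}{i}=(1-1)^l=0,
\qquad
\sum_{i=0}^{l}(-1)^i\,i\binom{l}{i}=-l\sum_{i=1}^{l}(-1)^{i-1}\binom{l-1}{i-1}=-l(1-1)^{l-1}=0,
\]
where the second equality uses the absorption identity $i\binom{l}{i}=l\binom{l-1}{i-1}$ (and the index can be started at $i=1$ since the $i=0$ term already vanishes). Subtracting the first sum from the second yields the ``full'' alternating sum
\[
\sum_{i=0}^{l}(-1)^i(i-1)\binom{l}{i}=0\qquad(l\ge 2).
\]

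Next I would peel off from this the terms that do not appear in the range $2\le i\le l-1$: the $i=0$ term equals $(-1)^0(0-1)\binom{l}{0}=-1$, the $i=1$ term equals $(-1)^1(1-1)\binom{l}{1}=0$, and the $i=l$ term equals $(-1)^l(l-1)\binom{l}{l}=(-1)^l(l-1)$. Hence
\[
\sum_{i=2}^{l-1}(-1)^i(i-1)\binom{l}{i}=0-\bigl(-1+0+(-1)^l(l-1)\bigr)=1-(-1)^l(l-1),
\]
which is exactly the claimed identity. Since every step is an elementary manipulation of binomial coefficients, there is no genuine obstacle here; the only points requiring a little care are the correct bookkeeping of the three boundary terms and restricting attention to $l\ge 2$ (equivalently, verifying the degenerate small cases by hand). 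This finite identity is the exact counterpart of the exponential generating-function identity in Lemma~\ref{ex}, and both will feed into the enumeration of Borda scoring profiles via inclusion--exclusion.
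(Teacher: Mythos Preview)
Your proof is correct. Both you and the paper ultimately establish that the full alternating sum
\[
\sum_{i=0}^{l}(-1)^i(i-1)\binom{l}{i}=0\qquad(l\ge 2)
\]
and then peel off the boundary terms $i=0,1,l$ to isolate the partial sum; the bookkeeping you carry out is exactly what the paper's final step suppresses. The difference lies only in how this full-sum vanishing is obtained: the paper expands $(1-x)^l$, divides by $x$, differentiates, and evaluates at $x=1$, whereas you invoke the two standard identities $\sum(-1)^i\binom{l}{i}=0$ and $\sum(-1)^i i\binom{l}{i}=0$ directly via the absorption rule $i\binom{l}{i}=l\binom{l-1}{i-1}$. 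Your route is slightly more elementary (no differentiation, no limiting behaviour in $x$) and makes the restriction $l\ge2$ transparent. The paper's route has the minor aesthetic advantage of paralleling the generating-function manipulation used for Lemma~\ref{ex}, so the two lemmas appear as the polynomial and exponential versions of the same trick. Either argument is perfectly adequate here.
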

The next lemma gives a useful cardinality formula. Suppose we have $m$ finite sets namely $A_1, \dots, A_m$. Let $A$ 
be the set of all elements which are present in at least two sets. Then the following lemma computes the cardinality 
of the set $A$.
\begin{lemma}\label{twoelementset}
 $|A| = \sum_{r=2}^{m} (-1)^r (r-1) \sum_{I \in {[m] \choose r}} |A_I|
 \footnote{Let A be a set. Then ${A \choose r}:=\{B \subseteq A : |B| = r\}$.
 I be any subset of an index set. Then $A_I := \cap_{i \in I} A_i$.}$.
\end{lemma}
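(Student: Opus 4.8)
The plan is to evaluate the right-hand side by a change in the order of summation and then count, for each element of the universe $U := \bigcup_{i=1}^{m} A_i$, the net multiplicity with which it is counted. For $x \in U$ set $S_x := \{i \in [m] : x \in A_i\}$ and $d(x) := |S_x|$ (the ``degree'' of $x$). By definition of $A$ we have $x \in A$ if and only if $d(x) \ge 2$, so proving the lemma amounts to showing that the right-hand side counts each $x$ with $d(x) \ge 2$ exactly once and each $x$ with $d(x) \le 1$ not at all.

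First I would rewrite the right-hand side. For a fixed $r$ and a fixed $I \in {[m] \choose r}$, the quantity $|A_I| = |\cap_{i \in I} A_i|$ is the number of $x \in U$ with $I \subseteq S_x$; summing over all $I \in {[m] \choose r}$, the element $x$ is counted exactly ${d(x) \choose r}$ times. Hence
\[
\sum_{r=2}^{m} (-1)^r (r-1) \sum_{I \in {[m] \choose r}} |A_I| \;=\; \sum_{x \in U} \sum_{r=2}^{m} (-1)^r (r-1) {d(x) \choose r}.
\]

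Next I would evaluate the inner sum $\sigma(\ell) := \sum_{r=2}^{m} (-1)^r (r-1) {\ell \choose r}$ as a function of $\ell := d(x)$. Since ${\ell \choose r} = 0$ for $r > \ell$, we get $\sigma(\ell) = \sum_{r=2}^{\ell} (-1)^r (r-1) {\ell \choose r}$; in particular $\sigma(0) = \sigma(1) = 0$ because these sums are empty, so elements lying in at most one set contribute nothing. For $\ell \ge 2$, Lemma\nobreakspace\ref{sum} gives $\sum_{r=2}^{\ell-1} (-1)^r (r-1) {\ell \choose r} = 1 - (-1)^\ell (\ell-1)$, and adding the single remaining term $r = \ell$, namely $(-1)^\ell (\ell-1) {\ell \choose \ell} = (-1)^\ell (\ell-1)$, yields $\sigma(\ell) = 1$. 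Substituting back, the displayed double sum collapses to $\sum_{x \in U : d(x) \ge 2} 1 = |A|$, which is exactly the claimed identity.

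The argument is essentially bookkeeping, so no serious obstacle is expected. The only points that need care are the interchange of summation (justified because all $A_i$, hence $U$, are finite) and the boundary term $r = \ell$ that Lemma\nobreakspace\ref{sum} deliberately omits; one must add it back by hand, after first checking the degenerate cases $\ell \in \{0,1\}$ separately.
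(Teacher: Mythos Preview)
Your proof is correct. Both you and the paper reduce to counting the net contribution of an element lying in exactly $\ell$ of the sets, and both appeal to Lemma~\ref{sum} for the key binomial identity. The difference is purely in presentation: the paper treats the coefficients as unknowns $a_r$ and proves by induction on $r$ that $a_r = (-1)^r(r-1)$ is what is needed so that elements of degree at most $r$ are counted exactly once after the $r$-th term, recovering $a_l$ from $a_l = 1 - \sum_{i=2}^{l-1} a_i \binom{l}{i}$ and then invoking Lemma~\ref{sum}. You instead plug the claimed coefficients in from the start, swap the order of summation, and verify directly that $\sigma(\ell) = \mathbf{1}[\ell \ge 2]$ via Lemma~\ref{sum} plus the missing top term. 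Your route is slightly more economical since it avoids the induction scaffolding, while the paper's version has the minor expository advantage of \emph{deriving} the coefficients rather than verifying them post hoc.
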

The following lemma gives a sufficient condition for a voting profile to be manipulable for large number of candidates.
\begin{lemma}\label{Bordasufficient}
 For fixed $n$ and \textit{large}\footnote{large enough such that Item\nobreakspace \ref {bordaequal} and Item\nobreakspace \ref {bordatied} of the \MakeUppercase proposition\nobreakspace \ref {manborda} cannot occur.} $m$, 
 if in a voting profile $(\succ_1, \dots, \succ_n) \in \mathcal{L(C)}^n$, 
 $\exists c_i,c_j \in \mathcal{C}, c_i \ne c_j$ such that both $c_i$ and $c_j$ are ranked in the top 
 $l$ positions in each preference $\succ_i, 1\le i \le n$ where $(l-1)n<m$, then the profile is manipulable.
\end{lemma}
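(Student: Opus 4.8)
The plan is to deduce the lemma from the characterization of strategy-proof Borda scoring profiles in Proposition~\ref{manborda}, rather than exhibiting an explicit manipulating ballot. Since Borda is a score-based rule, a voting profile is strategy-proof if and only if its scoring profile is one of the three types listed in Proposition~\ref{manborda} (this is exactly the remark in Section~\ref{Preliminaries} that strategy-proofness of a voting profile depends only on its scoring profile). So it suffices to show that the scoring profile $(x_1,\dots,x_m)$ of $(\succ_1,\dots,\succ_n)$ is of none of those three types. Items~\ref{bordaequal} and~\ref{bordatied} are ruled out by the hypothesis that $m$ is large: e.g.\ case~\ref{bordaequal} forces every candidate's score to equal $n(m-1)/2$, which is strictly below the lower bound $n(m-l)$ on $x_{c_i}$ derived below as soon as $m>2l-1$, and case~\ref{bordatied} is handled the same way. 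Hence the substantive point is that Item~1 of Proposition~\ref{manborda} fails, i.e.\ that \emph{no} candidate's score exceeds that of every other candidate by at least $m$.

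The core of the argument is two elementary score estimates. First, a candidate placed in position $p$ of a Borda ballot earns $m-p$ points, so a candidate lying in the top $l$ positions of every one of the $n$ ballots earns at least $m-l$ points per ballot, hence at least $n(m-l)$ points overall; in particular $x_{c_i}\ge n(m-l)$ and $x_{c_j}\ge n(m-l)$. Second, any candidate earns at most $m-1$ points per ballot, so $x_w\le n(m-1)$ for every $w$. Now fix an arbitrary candidate $c_w$ and set $c_{w'}:=c_i$ if $c_w\ne c_i$ and $c_{w'}:=c_j$ if $c_w=c_i$; then $c_{w'}\ne c_w$ and $x_{w'}\ge n(m-l)$, so
\[ x_w - x_{w'} \le n(m-1) - n(m-l) = n(l-1) < m, \]
the last inequality being precisely the hypothesis $(l-1)n<m$. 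Thus every candidate is out-scored by some other candidate by strictly less than $m$, so Item~1 of Proposition~\ref{manborda} cannot hold; the scoring profile is therefore not strategy-proof, and hence the voting profile $(\succ_1,\dots,\succ_n)$ is manipulable.

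I do not expect a genuine obstacle here. The only point needing attention is the logical form of the negation of Item~1: it asserts the \emph{existence} of a single candidate dominating all others by $\ge m$, so its failure is the statement ``for every $c_w$ there is some $c_{w'}\ne c_w$ with $x_w-x_{w'}<m$'', which is exactly what the displayed computation produces (with $c_{w'}$ always taken among $\{c_i,c_j\}$). The mild difficulty one would otherwise face in a direct proof --- that a candidate sitting near the top of the profile may not be overtakable by one extra ballot when many candidates are bunched near the maximum score --- has already been absorbed into Proposition~\ref{manborda}, which is why routing the argument through that proposition keeps it short and free of case analysis.
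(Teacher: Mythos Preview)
Your proof is correct and follows essentially the same route as the paper: both derive the score bounds $x_{c_i},x_{c_j}\ge n(m-l)$ and $x_w\le n(m-1)$, obtain $x_w-x_{c_{w'}}\le n(l-1)<m$, and then invoke Proposition~\ref{manborda} (the paper applies the bound only to the actual winner $c_w$ rather than to every candidate, which is enough since any candidate satisfying Item~1 must be the winner). Your treatment of the logical form of the negation of Item~1 and of the exclusion of Items~\ref{bordaequal} and~\ref{bordatied} is in fact more carefully spelled out than in the paper.
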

\begin{theorem}\label{BordaAsymptoticSTmm}
 Borda rule is not asymptotically strategy-proof on the number of candidates under IC assumption.
\end{theorem}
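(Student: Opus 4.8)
The plan is to establish the stronger statement that for every fixed $n \ge 2$, writing $m := |\mathcal{C}|$ and taking all probabilities under the IC assumption,
\[ \lim_{m \to \infty} \Pr\{\succ^n \in T_r(\mathcal{C})\} = 0, \]
which already rules out the existence of any $N_0$ demanded by the definition of asymptotic strategy-proofness on candidates. By Lemma~\ref{Bordasufficient} it suffices to show that, with probability tending to $1$ as $m \to \infty$, there are two distinct candidates ranked among the top $l$ positions of \emph{every} one of the $n$ votes, for a suitable $l = l(m,n)$ with $(l-1)n < m$.

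First I would take $l := \lfloor m/n \rfloor$, so that $(l-1)n \le m - n < m$ and $l \ge m/(2n)$ once $m \ge 2n$. Under IC the top-$l$ block of each voter is an independent, uniformly random $l$-subset of $\mathcal{C}$. Let $X$ be the number of candidates lying in the top-$l$ block of \emph{all} $n$ voters. Since a fixed candidate is in a fixed voter's top-$l$ block with probability $\binom{m-1}{l-1}/\binom{m}{l} = l/m$, independently over voters, we get $\mathbb{E}[X] = m(l/m)^n \ge m/(2n)^n \to \infty$ as $m \to \infty$ for fixed $n$.

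The next step is a second-moment estimate. For two distinct fixed candidates, the probability that both lie in a given voter's top-$l$ block is $\binom{m-2}{l-2}/\binom{m}{l} = \frac{l(l-1)}{m(m-1)}$, which is strictly less than $(l/m)^2$ because $l < m$ (two candidates are negatively correlated within a single vote). Multiplying over the $n$ independent votes and summing over ordered pairs,
\[ \mathbb{E}[X(X-1)] = m(m-1)\left(\frac{l(l-1)}{m(m-1)}\right)^{n} \le m(m-1)\left(\frac{l}{m}\right)^{2n} \le (\mathbb{E}[X])^2, \]
so $\mathrm{Var}(X) = \mathbb{E}[X] + \mathbb{E}[X(X-1)] - (\mathbb{E}[X])^2 \le \mathbb{E}[X]$. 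Chebyshev's inequality then gives
\[ \Pr\{X \le 1\} \le \Pr\{\,|X - \mathbb{E}[X]| \ge \mathbb{E}[X] - 1\,\} \le \frac{\mathrm{Var}(X)}{(\mathbb{E}[X]-1)^2} \le \frac{\mathbb{E}[X]}{(\mathbb{E}[X]-1)^2} \longrightarrow 0, \]
hence $\Pr\{X \ge 2\} \to 1$.

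Finally I would verify that for $m$ large relative to the fixed $n$ the event $X \ge 2$ is incompatible with the two exceptional strategy-proof scoring profiles in items~\ref{bordaequal} and~\ref{bordatied} of Proposition~\ref{manborda}: two candidates each ranked among the top $l \le m/2$ positions of all $n$ votes receive Borda score at least $n(m-l) > n(m-1)/2$, which is too large to occur in the all-tied profile of item~\ref{bordaequal} or the almost-all-tied profile of item~\ref{bordatied}. Therefore Lemma~\ref{Bordasufficient} applies, $X \ge 2$ forces $\succ^n$ to be manipulable, and $\Pr\{\succ^n \notin T_r(\mathcal{C})\} \to 1$, which proves the theorem. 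The genuinely delicate step is the variance bound, specifically the inequality $\mathbb{E}[X(X-1)] \le (\mathbb{E}[X])^2$, which is exactly the within-vote negative correlation $\frac{l(l-1)}{m(m-1)} < (l/m)^2$; the remainder is a routine first- and second-moment argument together with the bookkeeping needed to invoke Lemma~\ref{Bordasufficient}.
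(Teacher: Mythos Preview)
Your argument is correct, and it reaches the same conclusion by a genuinely different route than the paper. The paper chooses $l=\lambda m^{(n-1)/n}$ so that the number $X$ of ``universally top-$l$'' candidates has bounded mean $\lambda^n$; it then computes $\Pr\{X\ge 2\}$ exactly in the limit via the inclusion--exclusion identity of Lemma~\ref{twoelementset} and the series evaluation of Lemma~\ref{ex}, obtaining $1-(1+\lambda^n)e^{-\lambda^n}$, and finally lets $\lambda\to\infty$. You instead take $l=\lfloor m/n\rfloor$, which pushes $\mathbb{E}[X]$ to infinity, and then use the within-vote negative correlation $\tfrac{l(l-1)}{m(m-1)}<(l/m)^2$ to get $\mathrm{Var}(X)\le \mathbb{E}[X]$ and conclude by Chebyshev. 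Your approach is more elementary---it bypasses Lemmas~\ref{ex}, \ref{sum}, and \ref{twoelementset} entirely---while the paper's approach yields the sharper quantitative statement that $X$ is asymptotically Poisson with parameter $\lambda^n$ at the intermediate scale. One minor remark: your exclusion of item~\ref{bordatied} via ``score too large'' is a little loose as written, but it is easily tightened (indeed, the total-score identity $ms-1=nm(m-1)/2$ forces $s$ to be a non-integer for all $m\ge 3$, so item~\ref{bordatied} cannot occur at all), and in any case the ``large $m$'' hypothesis of Lemma~\ref{Bordasufficient} already absorbs this.
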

\begin{proof}
 Consider an election with $n$ voters, $m$ candidates. Let us define 
 $l := \lambda m^{\frac{n-1}{n}}, \lambda \in \mathbb{N}$ where $\lambda$ is a constant.  Let us 
 assume $m$ to be large enough to satisfy the setting of Lemma\nobreakspace \ref {Bordasufficient}. The existence of such an 
 $m$ is guaranteed from the fact that $l = \Theta(m^{\frac{n-1}{n}})$. 
 Let us define $A_i$ for $i\in[m]$ as follows, \\
 $A_i := \{(\succ_j)_{j=1}^n \in \mathcal{L(C)}^n : 
 i \text{ is ranked in the top } l \text{ positions in } \succ_j, 1\le j \le n\}$\\
 Clearly,
 $ I \subseteq [m], |I| = k $ then, 
 $$ |A_I| = 
  \left\{
  \begin{array}{lr}
   \left( \left( m-l \right)!\Pi_{i=0}^{k-1} \left(l-i\right) \right)^n &, \text{ if } k \le l\\
   0									&, \text{ if } k >l
  \end{array}
  \right.
 $$
 Let us define the set $A$ as follows,
 $$ A := \{ \succ \in \mathcal{L(C)}^n : \exists i, j \in [m], i \ne j, \succ \in A_i, \succ \in A_j\} $$
 Let us define the set of all manipulable voting profiles $M^n([m])$ as 
 $$M^n([m]) := \mathcal{L}([m])^n \setminus T^n([m])$$
 From Lemma\nobreakspace \ref {Bordasufficient}, it follows that $ A \subseteq M^n([m]) $. Now,
 
\begin{eqnarray*}
 \frac{|M^n([m])|}{|\mathcal{L(C)}^n|}	&\ge&	\frac{|A|}{(m!)^n}\\
						&=&	\sum_{r=2}^{l} (-1)^r (r-1) \sum_{I \in {[m] \choose r}} \frac{|A_I|}{(m!)^n}\\
						&=&	\sum_{r=2}^{l} (-1)^r (r-1) {m \choose r} \left(\Pi_{i=0}^{r-1} \frac{l-i}{m-i}\right)^n\\
   \lim_{m \to \infty}	\frac{|M^n([m])|}{|\mathcal{L(C)}^n|}
						&\ge&	\sum_{r=2}^{\infty} (-1)^r (r-1) \frac{\lambda^{rn}}{r!}\\
						&=&	1 - (1+\lambda^n)e^{-\lambda^n}
\end{eqnarray*}
 The second step follows from Lemma\nobreakspace \ref {twoelementset}, the fourth step is a result of using the fact that $l = \Theta(m^{\frac{n-1}{n}})$, 
 and the last step follows from Lemma\nobreakspace \ref {ex}. Now since the above 
 inequality is true for all $\lambda \in \mathbb{R}$, we get,
 $$ \lim_{m \to \infty}	\frac{|M^n([m])|}{|\mathcal{L(C)}^n|} \ge \sup_{\lambda \in \mathbb{N}} \{1 - (1+\lambda^n)e^{-\lambda^n}\} = 1 $$
 Therefore the Borda rule is not asymptotically strategy-proof on the number of candidates under IC assumption. \qed
\end{proof}
The above theorem shows that it is highly likely that the voting profiles for which manipulating the Borda 
voting rule is intractable are actually manipulable. Hence computational hardness provides some resistance against 
manipulation of the Borda rule. The behavior of the Borda rule under ISC assumption is still open. 

\section{Discussion and Future Work}\label{ConclusionsFutureWork}
For all the voting rules studied here, the severity of manipulation has been shown to be 
present mostly in the elections with large number of candidates and not in the cases with many voters. 
Also this is the first paper, to the best of our knowledge, which tries to address the reverse 
question posed in the beginning - the profiles which are being protected by computational 
hardness are at all manipulable or not. We have studied some popular voting rules in this paper. 
Certainly a general result would be much more interesting which is a prospective 
future work.

We have explored a new societal assumption called ISC and argued for 
its need in the context of computational social choice. We notice a positive correlation among 
results under ISC assumption and under IC assumption. Again a general result 
connecting these two concepts is another important future direction of research. Here we 
conjecture that a voting procedure is asymptotically strategy-proof or $c$-collusion 
proof either on the number of voters or on the number of candidates under ISC 
assumption if and only if it is so under IC assumption.
An interesting future direction is to find axiomatic characterization of 
asymptotically strategy-proof voting rules. 

\bibliographystyle{apalike}
\bibliography{asymptotic}

\newpage
\appendix
\section{Appendix}

\setcounter{observation}{0}

\begin{observation}
 For Plurality rule, following and following 
 scoring profiles \\$(x_1,x_2,\dots,x_m)\in S^n([m])$ are strategy-proof: 
 \begin{enumerate}
  \item
  \begin{enumerate}
   \item $|x_i-x_j|=1$ $or$ $0,\forall 1\le i,j\le m$ for lexicographic tie breaking for the manipulator.
   \item $x_1=x_2=\dots=x_m$ for lexicographic tie breaking against the manipulator.
  \end{enumerate}
  \item $\exists w\in [m]\text{ such that } x_w-x_i\ge2,\forall 1\le i\le m$, $i\neq w$, that is $c_w$ is the winner and 
  its score is at least two greater than that of every other candidate.
  \item $\exists w\in [m]\text{ such that } x_w> x_i,\forall 1\le i\le m$, $i\neq w$, with lexicographic tie breaking rule against the manipulator.
 \end{enumerate}
\end{observation}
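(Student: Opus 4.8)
The plan is to reduce the whole statement to a short case analysis on the ``gap structure'' of the score vector, using the fact that for Plurality a single extra ballot adds exactly one point to exactly one candidate. Concretely, from a scoring profile $x=(x_1,\dots,x_m)$ the new voter can realize precisely the score vectors $x+e_j$, $j\in[m]$ (here $e_j$ is the $j$-th unit vector and $j$ is the top candidate of the reported ballot), and — since Plurality is score-based — the resulting winner $W_j$ depends only on $j$ together with the tie-breaking convention. So a scoring profile is strategy-proof exactly when, for every possible true preference $\succ$ of the new voter, boosting $\mathrm{top}(\succ)$ yields a $\succ$-weakly-best element of the reachable winner set $\{W_j:j\in[m]\}$. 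First I would record this reformulation and note that, for a score-based rule, it does not depend on which voting profile realizes $x$, which legitimizes speaking of strategy-proof \emph{scoring} profiles at all.

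For the ``if'' direction I would check the three families directly. If $x$ has a unique maximizer $c_w$ with $x_w-x_i\ge 2$ for all $i\ne w$, then after any single increment $c_w$ still strictly leads, so $W_j=c_w$ for every $j$ and honesty is trivially optimal under either tie-breaking rule (item~2). If $x$ has a unique maximizer $c_w$ and ties are broken against the manipulator, then incrementing any $j\ne w$ either keeps $c_w$ strictly on top or creates the two-way tie $\{c_w,j\}$, which the adversarial rule resolves against the candidate the voter is promoting, i.e. to $c_w$; the outcome is forced to be $c_w$, so the profile is strategy-proof (item~3, which subsumes item~2 in this case). If every two scores of $x$ differ by at most one and ties are broken in the manipulator's favour, then boosting one's honest top candidate $a_1$ always lands $a_1$ in the new set of maximizers (either $a_1$ was already a maximizer, or it had score $M-1$ and now reaches $M$), and the favourable tie-break elects $a_1$ — the best conceivable outcome, hence honesty is optimal (item~1(a)); item~1(b), all scores equal with the adversarial rule, is the degenerate sub-case where the boosted candidate becomes the \emph{unique} maximizer and the tie-breaking rule is irrelevant.

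For the ``only if'' direction I would take a scoring profile $x$ violating every condition listed for the relevant tie-breaking rule and exhibit a profitable deviation. If ties are broken against the manipulator, then $x$ is neither all-equal nor a unique-maximizer profile, so pick two maximizers $p,q$ and a candidate $k$ with $x_k<\max_i x_i$; the voter with true preference $k\succ p\succ\cdots\succ q$ honestly boosts the hopeless $k$ and is handed $q$ by the adversarial tie-break, whereas boosting $p$ makes $p$ the unique maximizer with $p\succ q$ — a manipulation. If ties are broken in the manipulator's favour, then $x$ is not a within-one profile and not a unique maximizer with lead $\ge 2$; the main case is a unique maximizer $c_w$ with a runner-up $j$ at score $x_w-1$ and some $k$ at score $\le x_w-2$, where the preference $k\succ j\succ c_w\succ\cdots$ is manipulable (honesty boosts $k$ and leaves $c_w$ winning, while boosting $j$ forms the tie $\{c_w,j\}$, resolved to $j\succ c_w$); the remaining configurations (several maximizers but range at least two) I would handle by the same ``hopeless favourite'' device, once the favourable rule has been pinned down precisely.

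I expect the real obstacle to be exactly this last point: careful bookkeeping of the two tie-breaking conventions. One must specify unambiguously which candidate wins each tie in terms of the promoting voter, and in the ``only if'' direction one must choose the manipulating preference so that honesty triggers the \emph{bad} resolution of a tie while a single alternative ballot avoids it. Boundary sub-cases also deserve care — notably when a ``hopeless'' favourite, after receiving its one extra point, exactly reaches the current top score rather than staying strictly below it — and one should check that the displayed deviation still works there.
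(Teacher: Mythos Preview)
Your approach is essentially the same as the paper's: reduce to the observation that one extra plurality ballot produces exactly the score vectors $x+e_j$, then do a case analysis on the gap structure, verifying strategy-proofness for the listed profiles and exhibiting a ``hopeless favourite'' manipulation otherwise. Your write-up is in fact more careful than the paper's short sketch --- the paper treats items~2 and~3 together as ``non-pivotal'' and dispatches the converse in one sentence --- and you correctly flag that the delicate point is pinning down the two tie-breaking conventions precisely, which is exactly where both arguments need the most care (in particular your Case~B with several maximizers and range $\ge 2$ under the favourable rule).
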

\begin{proof}
The first case talks about the situations where the score of all the candidates are as much same as possible. That is some candidates have 
highest score each and others have score just one less than highest. Now if the new voter's most preferred candidate is already a candidate with highest 
score then her vote will make the candidate the winner. If her most preferred candidate has not scored highest then her vote will 
results in a tie and hence will make its candidate win only with lexicographic tie breaking rule for the manipulators. Hence the first case is strategy-proof.

The second and third cases deal with non-pivotal profiles, that is the new candidate cannot change current winner and hence she has no incentive to lie. 
Thus the scoring profile is strategy-proof.

In all other cases, the number of candidates tied for win or has score one less than the winner's, is more than one and there is a candidate whose score 
is at least two less than the winner's score. If the voter's most preferred candidate is the candidate with least score then she cannot make 
her most preferred candidate win. But she can make one of the tied candidates and runner ups a winner depending upon the tie breaking rule being used. 
Hence these scoring profiles are manipulable. \qed
\end{proof}
\begin{observation}
 For plurality rule, following and only following scoring profiles $(x_1,x_2,\dots,x_m)\in S^n([m])$ are c-collusion (c$>$1) proof:
 \begin{itemize}
  \item $\exists w\in\{1,2,\dots,m\}\text{ such that } x_w-x_i\ge c+1,\forall 1\le i\le m$ $i\neq w$ , that is $c_w$ is the winner and 
  its score is at least c+1 more than every other candidate.
 \end{itemize}
\end{observation}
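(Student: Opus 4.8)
The plan is to establish the two inclusions separately: every scoring profile of the stated form is $c$-collusion-proof, and conversely any $c$-collusion-proof scoring profile has this form.

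The first inclusion is the easy one. Suppose $(x_1,\dots,x_m)$ satisfies $x_w-x_i\ge c+1$ for all $i\ne w$. If $c$ new voters are added and vote in any way whatsoever, each candidate $i\ne w$ gains at most $c$ points while $c_w$ loses none, so the resulting plurality score of $i$ is at most $x_i+c\le x_w-1$, strictly below the resulting score of $c_w$. Hence $c_w$ is the unique winner no matter how the $c$ new voters behave (in particular no ties ever arise, so the tie-breaking rule is irrelevant here), so every coalition of size $c$ obtains the winner $c_w$ both when voting truthfully and under any joint misreport, and therefore no member of the coalition can be made better off. Thus the profile is $c$-collusion-proof.

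For the converse I would prove the contrapositive. Assume no $w$ with $x_w-x_i\ge c+1$ for all $i\ne w$ exists; equivalently, writing $M=\max_i x_i$, there is a candidate $b$ distinct from some top-scorer $a$ with $x_b\ge M-c$. I would split into two cases. If $x_b=M$ (at least two candidates attain the top score), let one new voter truly top-rank $a$ and the remaining $c-1$ new voters truly top-rank $b$; the truthful tally then gives $a$ and $b$ the two highest scores, so the truthful winner $w_0$ lies in $\{a,b\}$, and by instead having the whole coalition pool its $c$ ballots on whichever of $\{a,b\}$ is \emph{not} $w_0$, that candidate becomes the outright winner, which is strictly preferred by the coalition member whose true top choice it is — contradicting $c$-collusion-proofness. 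If instead $x_b<M$, so $a$ is the unique top-scorer and $M-c\le x_b\le M-1$, let one new voter truly top-rank $b$ and the remaining $c-1$ truly top-rank $a$; now the truthful winner is $a$, while pooling all $c$ ballots on $b$ raises $b$'s score to $x_b+c\ge M$ and makes $b$ the winner, strictly better for the voter whose top choice is $b$, again contradicting $c$-collusion-proofness.

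The main obstacle is this converse direction, and within it the borderline configurations: in the second case the value $x_b=M-c$ only produces a tie between $a$ and $b$ after the coalition pools on $b$, so $b$ wins exactly when that tie is broken toward the coalition, and more generally, when all scores are crowded within $c$ of $M$ one must check carefully that the coalition really can install a candidate that one of its members strictly prefers to the truthful outcome. Getting all these sub-cases to go through cleanly is where the bookkeeping lies, and it is worth making explicit that the characterization is asserted under lexicographic tie-breaking in favour of the manipulators (the single-voter characterization already exhibits this dependence on the tie-breaking convention).
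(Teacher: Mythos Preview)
Your proposal is correct and in fact far more detailed than the paper's own argument, which is the single sentence ``On similar line of the proof of Proposition~\ref{manplural}.'' Your two-inclusion structure, with the contrapositive split into the tied-top case and the unique-top case, is exactly how one would flesh out that pointer, and each sub-case goes through as you describe. You are also right to flag the tie-breaking dependence: the borderline configuration $x_b=M-c$ (and more generally every non-winner at distance exactly $c$) only yields a successful manipulation when ties are broken in favour of the coalition, so the ``only if'' direction of the characterisation implicitly assumes that convention --- under tie-breaking against the manipulators the correct threshold would be $c$ rather than $c+1$, mirroring item~3 of Proposition~\ref{manplural}. The paper does not make this explicit here, so your remark is a genuine clarification rather than a gap in your argument.
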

\begin{proof}
 On similar line of the proof of the \MakeUppercase proposition\nobreakspace \ref {manplural}.
\end{proof}
\setcounter{lemma}{0}
\setcounter{theorem}{1}
\begin{lemma}
 For Plurality rule, $|S^n([m])| = {n+m-1 \choose m-1} $.
\end{lemma}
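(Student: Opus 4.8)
The plan is to first pin down $S^n([m])$ explicitly as the set of all ways of writing $n$ as an ordered sum of $m$ non-negative integers, and then to count that set by the standard stars-and-bars argument. Under the plurality rule each of the $n$ voters awards exactly one point to the candidate she ranks first and zero to everyone else, so the score vector $(x_1,\dots,x_m)$ arising from any voting profile consists of non-negative integers with $\sum_{i=1}^m x_i = n$. Conversely, every such vector is realizable: given $(x_1,\dots,x_m)\in\mathbb{N}^m$ with $\sum_{i=1}^m x_i = n$, let $x_i$ of the voters place candidate $c_i$ on top, filling the remaining $m-1$ positions in any fixed manner; the resulting voting profile produces precisely the scoring profile $(x_1,\dots,x_m)$. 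Hence
\[ S^n([m]) = \Bigl\{(x_1,\dots,x_m)\in\mathbb{N}^m : \textstyle\sum_{i=1}^m x_i = n\Bigr\}. \]

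It then remains to count the solutions in non-negative integers of $x_1+\dots+x_m = n$. I would encode such a solution as a row of $n+m-1$ symbols consisting of $n$ ``stars'' and $m-1$ ``bars'', where the bars split the stars into $m$ consecutive blocks of sizes $x_1,\dots,x_m$. This gives a bijection between $S^n([m])$ and the family of $(m-1)$-element subsets of $\{1,\dots,n+m-1\}$ (namely the set of bar positions), so that $|S^n([m])| = \binom{n+m-1}{m-1}$.

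I do not expect any real obstacle here. The only step that deserves an explicit sentence is the realizability half of the characterization above, which guarantees that no composition of $n$ is missing from $S^n([m])$; the remainder is the textbook stars-and-bars computation, and one could equally cite it directly.
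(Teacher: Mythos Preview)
Your proof is correct and follows the same approach as the paper: both identify $S^n([m])$ with the set of non-negative integer solutions of $\sum_{i=1}^m x_i = n$ and invoke the standard stars-and-bars count $\binom{n+m-1}{m-1}$. Your version is simply more explicit, spelling out the realizability direction and the bijection with bar positions, whereas the paper states the equation and quotes the formula directly.
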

\begin{proof}
 The set $S^n([m])$ is the solution set of the following equation.
 \[ \sum_{i=1}^m x_i = n, x_i \in \{ 0, 1, \dots, n\},\forall i \in [m] \]
 Hence $|S^n([m])| = {n+m-1 \choose m-1}$. \qed
\end{proof}
\begin{theorem}
 For Plurality rule, for $m>n$, $\frac{|E^n([m])|}{|S^n([m])|} \ge \left(\frac{m-n+1}{m}\right)^n$.
\end{theorem}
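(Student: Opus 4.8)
The plan is to determine $|E^n([m])|$ exactly in the regime $m>n$, then to compare it with $|S^n([m])|$ obtained from Lemma~\ref{pluralityscorecard} by rewriting the quotient of binomial coefficients as a product of $n$ fractions and bounding each factor from below.

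First I would pin down $E^n([m])$ for plurality when $m>n$. A scoring profile $(x_1,\dots,x_m)\in E^n([m])$ satisfies $x_i\in\mathbb{N}$, $\sum_{i=1}^m x_i=n$, and $|x_i-x_j|\le 1$ for all $i,j$; hence all coordinates lie in $\{k,k+1\}$ for some $k\in\mathbb{N}$. If $k\ge 1$ then $\sum_i x_i\ge m>n$, which is impossible, so $k=0$ and each $x_i\in\{0,1\}$. Since the coordinates sum to $n$, exactly $n$ of them equal $1$ and the rest equal $0$; therefore $|E^n([m])|=\binom{m}{n}$.

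Next, Lemma~\ref{pluralityscorecard} gives $|S^n([m])|=\binom{n+m-1}{m-1}=\binom{n+m-1}{n}$. Writing both binomials over the common denominator $n!$ and cancelling yields
\[ \frac{|E^n([m])|}{|S^n([m])|}=\frac{m(m-1)\cdots(m-n+1)}{(m+n-1)(m+n-2)\cdots m}=\prod_{i=0}^{n-1}\frac{m-i}{m+n-1-i}. \]
I would then bound each factor below by $\tfrac{m-n+1}{m}$. All quantities involved are positive since $0\le i\le n-1<m$, so cross-multiplying reduces the desired inequality $\tfrac{m-i}{m+n-1-i}\ge\tfrac{m-n+1}{m}$ to $(n-1)(n-1-i)\ge 0$, which holds for every $i\le n-1$. (Equivalently, $\tfrac{m-i}{m+n-1-i}=1-\tfrac{n-1}{m+n-1-i}$ is non-increasing in $i$, attaining its minimum $\tfrac{m-n+1}{m}$ at $i=n-1$.) Multiplying the $n$ resulting bounds together gives $\frac{|E^n([m])|}{|S^n([m])|}\ge\left(\tfrac{m-n+1}{m}\right)^n$, as claimed.

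There is essentially no serious obstacle here: the one step requiring a moment's thought is the characterization, namely that the hypothesis $m>n$ forces the plurality profiles in $E^n([m])$ to be exactly the $0/1$-vectors with $n$ ones, giving $|E^n([m])|=\binom{m}{n}$; everything after that is manipulation of binomial coefficients together with the elementary factorwise estimate above.
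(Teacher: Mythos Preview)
Your proof is correct and follows essentially the same approach as the paper: you identify $|E^n([m])|=\binom{m}{n}$ (which you justify more carefully than the paper does), use Lemma~\ref{pluralityscorecard} for $|S^n([m])|$, expand the quotient as a product of $n$ factors, and bound each factor below by $\tfrac{m-n+1}{m}$. The only cosmetic difference is that the paper phrases the factorwise estimate via the inequality $\tfrac{a}{b}\ge\tfrac{a-1}{b-1}$ for $0<a<b$, whereas you cross-multiply directly; these are equivalent.
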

\begin{proof}
We have,
\begin{eqnarray*}
 \frac{|E^n([m])|}{|S^n([m])|}  &=&   \frac{{m \choose n}}{{m+n-1 \choose n}} \\
				&=&   \frac{m}{m+n-1}.\frac{m-1}{m+n-2}\dots \frac{m-n+1}{m} \\
				&\ge& \left(\frac{m-n+1}{m}\right)^n
\end{eqnarray*}
The last inequality comes from the fact that $\frac{a}{b} \ge \frac{a-1}{b-1}, \forall a,b \in \mathbb{N}, 0<a<b$. \qed
\end{proof}
\begin{observation}
 For k-Approval rule with $1<k<m$, following and only following scoring profiles $(x_1,x_2,\dots,x_m)\in S^n([m])$ are strategy-proof:
 \begin{enumerate}
  \item $\exists w\in\{1,2,\dots,m\}\text{ such that } x_w-x_i\ge2,\forall 1\le i\le m,i\neq w,$ that is $c_w$ 
  is the winner and its score is at least two more than every other candidates.
  \item $x_1=x_2=\dots=x_m$ with lexicographic tie breaking rule for the manipulator.
 \end{enumerate}
\end{observation}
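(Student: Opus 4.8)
The plan is to work throughout with a single extra voter, which is enough because strategy-proofness of a scoring profile is witnessed by one manipulator and, $k$-Approval being score based, depends only on the vector of scores. First I would record the elementary fact that one ballot adds exactly one point to exactly $k$ candidates and leaves the remaining $m-k$ untouched, so the $i$-th tally moves from $x_i$ to $x_i$ or $x_i+1$. Then a scoring profile is strategy-proof precisely when, for every linear order the new voter might truly hold, submitting it sincerely is a weakly best reply.

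Sufficiency of the two listed families is then immediate. If $c_w$ satisfies $x_w-x_i\ge 2$ for all $i\ne w$, then after any ballot $c_w$ has score $\ge x_w$ while every $c_i$ has score $\le x_i+1\le x_w-1$; hence $c_w$ remains the unique winner regardless of the vote, the ballot is non-pivotal, and sincerity is trivially optimal. If $x_1=\dots=x_m=a$ and ties are broken for the manipulator, a sincere ballot raises exactly her top $k$ candidates to $a+1$ and leaves the rest at $a$, so the winner is her tie-break favourite among her approved set, which is her overall most preferred candidate --- the best outcome she could possibly obtain --- so sincerity is again optimal.

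The converse --- that every other scoring profile is manipulable --- is the real work, and it is where the hypothesis $1<k<m$ enters. Given a profile that is neither of the two forms, let $M$ be its maximum score; then either $M$ is attained by two or more but not all candidates, or $M$ is attained by a unique candidate while some other candidate has score exactly $M-1$. Following the pattern of the proof of Proposition~\ref{manplural}, but now accounting for the manipulator having to cast $k$ approvals rather than one, I would construct a voter whose sincere ballot is \emph{forced} to approve a candidate it would do better to suppress: rank on top a candidate $t$ with $x_t\ge M-1$ that is not the sole leader, place a threatening (co-)leader $c$ just below it, and fill the remaining top positions arbitrarily. Her sincere ballot then lifts $c$ and elects $c$ (or some candidate she prefers less than $t$), whereas the ballot that approves $t$ together with $k-1$ of the lowest-scoring candidates --- such candidates exist exactly because $m>k$ --- levels the leading candidates at score $M$ and, ties being broken in her favour, elects $t$, a strict improvement, exhibiting a profitable deviation. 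The main obstacle is making this argument uniform: organising the possible score configurations compactly and, in each, both checking that enough low-scoring candidates remain to re-deploy the freed approval and confirming that the sincere outcome is genuinely worse --- the latter being precisely what fails for the two families handled in the previous paragraph, so the case analysis must be set up so that these are the only exceptions that slip through.
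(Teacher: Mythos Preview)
Your sufficiency argument is fine and matches the paper's. The converse, however, has a real gap. Your case split is correct, but the construction breaks in sub-case (A) whenever the leaders are the \emph{only} candidates with score $\ge M-1$, i.e.\ when $|W|\ge 2$ and every non-leader sits at $\le M-2$. There is then no candidate at $M-1$, so your $t$ (``$x_t\ge M-1$, not the sole leader'') is forced to be a leader, and your $c$ is another leader. Under tie-breaking for the manipulator the sincere ballot --- which approves $t$, $c$, and $k-2$ further candidates --- puts both $t$ and $c$ at $M+1$, and the tie is resolved in favour of $t$, the manipulator's top choice. So sincere already delivers $t$, contradicting your claim that ``her sincere ballot then lifts $c$ and elects $c$''; the proposed deviation yields the same outcome and no manipulation is exhibited.

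This is not a patchable omission: such profiles are genuinely strategy-proof under tie-breaking for the manipulator. Take $m=3$, $k=2$, scores $(2,2,0)$. For any true preference the non-leader can reach at most $1<2$, so the achievable winners are exactly $\{c_1,c_2\}$; a short check of all six orders shows that sincere voting always returns the voter's preferred one of $c_1,c_2$, hence is optimal. This profile lies outside both listed families, so the ``only'' direction of the proposition fails as stated. The paper's own proof has the identical hole: with its suggested true preference $c_1\succ\cdots\succ c_w\succ c_{w+1}\succ\cdots\succ c_m$ and tie-breaking for the manipulator, sincere voting already elects $c_1$, so the asserted manipulation does not materialise. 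In short, your plan and the paper's argument share the same structural gap; the statement itself needs an additional strategy-proof family (several leaders, all others at least two below, tie-breaking for the manipulator) before a proof of necessity can go through.
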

\begin{proof}
 The first case deals with non-pivotal cases, i.e. the new candidate can not change current winner and hence he/she has no incentive to lie. Thus the scoring profile in
non-manipulable. In the second case, the situation is same as with one voter and thus non-manipulable.\\
In all other cases, let $W=\{c_1,\dots,c_w\}$ be the set of all candidates with maximum score, $R=\{c_{w+1},\dots,c_m\}$ be rest of the candidates. 
If the new voter's true preference is $c_1\succ c_2\succ \dots c_w\succ c_{w+1}\succ\dots c_m$ then new voter can clearly manipulate. \qed
\end{proof}
\begin{observation}
 For k-Approval rule with $1<k<m$, following scoring profiles $(x_1,x_2,\dots,x_m)\in S^n([m])$ are c-collusion-proof:
 \begin{enumerate}
  \item $\exists w\in\{1,2,\dots,m\}\text{ such that } x_w-x_i\ge2c,\forall 1\le i\le m,i\neq w,$ that is $c_w$ 
  is the winner and its score is at least two more than every other candidates.
 \end{enumerate}
\end{observation}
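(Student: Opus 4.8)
The plan is to prove something marginally stronger than the assertion, namely that no coalition of $c$ new voters can change the winner of such a profile at all; $c$-collusion-proofness in the sense of Definition~\ref{newdefCollusionProofness} then follows at once, since for every pair of coalition ballots $\succ_1^c,\succ_2^c\in\mathcal{L(C)}^c$ the outcome is the very same candidate $c_w$, so no member of the coalition can ever be made strictly better off by switching from one coalition ballot to another. Because the statement is phrased for a \emph{scoring} profile, it suffices to argue about a single voting profile realizing it, by the remark following Definition~\ref{newdefCollusionProofness}.

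The one structural fact that does the work is the following: a $k$-Approval ballot contributes exactly $1$ to each of its $k$ approved candidates and $0$ to the rest, so $c$ additional ballots raise the score of any fixed candidate by at most $c$ and lower nobody's score. Concretely, I would fix a scoring profile $(x_1,\dots,x_m)\in S^n([m])$ together with a candidate $c_w$ satisfying $x_w-x_i\ge 2c$ for every $i\ne w$, let the $c$ new voters cast arbitrary $k$-Approval ballots, and let $\delta_i\in\{0,1,\dots,c\}$ denote the number of those ballots that approve $c_i$ (so $\sum_{i=1}^m\delta_i=ck$); the post-coalition score of $c_i$ is then $x_i+\delta_i$.

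The verification now reduces to a single chain of inequalities: for every $i\ne w$,
\[ x_i+\delta_i\ \le\ x_i+c\ \le\ (x_w-2c)+c\ =\ x_w-c\ <\ x_w\ \le\ x_w+\delta_w, \]
so $c_w$ is still the unique candidate with maximum score after the coalition has voted, hence is the winner under either of the two tie-breaking rules and for every choice of the coalition's ballots. This establishes that the profile is $c$-collusion-proof.

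I do not expect a real obstacle here: the only point needing a moment's care is the ``boost by at most $c$'' bound, which is exactly what distinguishes $k$-Approval (and plurality, cf.\ Proposition~\ref{manpluralcoalitional}) from rules with wider score spreads such as Borda. As in the single-voter case (Proposition~\ref{mankapproval}), only sufficiency is claimed, and the constant $2c$---chosen so as to specialize to the $c=1$ threshold---is not asserted to be the smallest one that works.
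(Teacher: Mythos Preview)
Your argument is correct and is precisely the (one-line) approach the paper takes: the profile is non-pivotal because the $c$ new ballots can raise any candidate's score by at most $c$, so $c_w$ remains the unique winner and no coalition has any incentive to deviate. Your explicit inequality chain simply spells out the paper's sentence ``the new $c$ voters cannot change the outcome of the election.''
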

\begin{proof}
 The new $c$ voters cannot change the outcome of the election and hence the profiles are $c$-collusion-proof. \qed
\end{proof}
\setcounter{theorem}{3}
\begin{theorem}
 For $1<k<m$, $k$-approval is asymptotically $o(n)$-collusion-proof on the number of voters under ISC assumption.
\end{theorem}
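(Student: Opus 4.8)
The plan is to deduce the statement from a lattice‑point estimate, using the sufficient condition for $c$-collusion-proofness in Proposition~\ref{mankapprovalcoalition}: a scoring profile $(x_1,\dots,x_m)\in S^n([m])$ is $c$-collusion-proof whenever some candidate $c_w$ satisfies $x_w-x_i\ge 2c$ for all $i\ne w$. Write $B$ for the set of these \emph{clear-winner} profiles. Under the ISC assumption the scoring profile is drawn uniformly from $S^n([m])$, the parameters $m,k$ are fixed with $1<k<m$, and $c=c(n)=o(n)$; so it suffices to show $|S^n([m])\setminus B|/|S^n([m])|\to 0$ as $n\to\infty$.

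First I would describe $S^n([m])\setminus B$. If $(x_1,\dots,x_m)\notin B$ then, applying the negation of the clear-winner condition to an arg-max coordinate, the largest and second-largest scores of the profile differ by strictly less than $2c$; choosing a pair $w\ne w'$ of candidates realizing these two largest values gives $|x_w-x_{w'}|<2c$. Hence $S^n([m])\setminus B\subseteq\bigcup_{1\le w<w'\le m}\{x\in S^n([m]):|x_w-x_{w'}|<2c\}$.

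Next I would bound one of these sets, say for the pair $(1,2)$. There are at most $(n+1)(4c-1)=O(cn)$ pairs $(x_1,x_2)\in\{0,\dots,n\}^2$ with $|x_1-x_2|<2c$, and for each of them the number of completions $(x_3,\dots,x_m)\in\{0,\dots,n\}^{m-2}$ with $\sum_{i\ge 3}x_i=nk-x_1-x_2$ is at most $\binom{nk+m-3}{m-3}=O(n^{m-3})$ (when $m=3$ such a completion is unique or nonexistent). Summing over the $\binom{m}{2}$ pairs yields $|S^n([m])\setminus B|=O(cn^{m-2})$, the constant depending only on $m$ and $k$.

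Finally, $|S^n([m])|=\Theta(n^{m-1})$: it is a quasi-polynomial in $n$ of degree $m-1$ whose leading coefficient is the relative volume of the polytope $\{y\in[0,1]^m:\sum_i y_i=k\}$, which is strictly positive for $1\le k\le m-1$; alternatively one can exhibit $\Omega(n^{m-1})$ profiles directly by letting $x_1,\dots,x_{m-1}$ each range over an interval of length $\Theta(n)$ around $nk/m$ (which stays inside $[0,n]$ since $2\le k\le m-1$) and keeping those for which $x_m=nk-\sum_{i<m}x_i$ lands in $[0,n]$, a positive fraction of them. Combining, $|S^n([m])\setminus B|/|S^n([m])|=O(c/n)\to 0$ because $c=o(n)$, which proves the theorem. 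There is no deep obstacle here; the one point needing care is that, unlike the plurality case, the box constraints $0\le x_i\le n$ obstruct the ``boost the winner'' injection, which is why I take the complement-counting route, and those same constraints make the lower bound $|S^n([m])|=\Omega(n^{m-1})$ the only step worth double-checking.
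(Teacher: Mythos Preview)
Your argument is correct, and the route is genuinely different from the paper's. The paper does \emph{not} count the complement; instead it builds an injection $S^{n-lc}([m])\hookrightarrow A$ by simultaneously raising every coordinate---the winner by $l=\binom{m-1}{k-1}$ and each non-winner by $l'=\binom{m-2}{k-2}$ (per unit of $c$)---so that the total score stays at $nk$, the upper bound $x_i\le n$ is respected, and the winner's margin grows by $(l-l')c$. It then divides the inclusion--exclusion formula $|S^{n'}([m])|=\sum_{i=0}^k(-1)^i\binom{n'(k-i)+m-1}{m-1}$ at $n'=n-lc$ by the same formula at $n'=n$ and passes to the limit. Your complement-plus-union-bound argument sidesteps the need to invent this map and replaces the exact formula by the Ehrhart-type estimate $|S^n([m])|=\Theta(n^{m-1})$; in return you get the explicit rate $O(c/n)$ for the failure probability, which the paper's limit computation does not yield directly. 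One small correction to your closing remark: the box constraints $0\le x_i\le n$ do block the \emph{na\"ive} ``add to the winner only'' injection, but not the paper's variant, which raises every coordinate.
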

\begin{proof}
 Define $l:={m-1\choose k-1}, l^\prime:={m-2\choose k-2}$. Let $c$ be the coalition size. $c=o(n)$. 
 Let A be the set of all $c$-collusion-proof scoring profiles with $n$ voters and $m$ candidates.
 Consider the following map,\\ 
 \[f:S^{n-lc}([m])\longrightarrow A\]
 \[(x_1,x_2,\dots,x_m)\mapsto (x_1+l^\prime,\dots,x_w+l,\dots,x_m+l^\prime)\] 
 where $c_w$ is the winner. Hence,\\ 
 \begin{eqnarray*}
  |A| &\ge& |S^{n-lc}([m])|\\
      &=& \sum_{i=0}^k (-1)^i{(n-lc)(k-i)+m-1\choose m-1}
 \end{eqnarray*}
 Now,
 \begin{eqnarray*}
  1 &\ge& \lim_{n \to \infty} \frac{|A|}{|S^n([m])|}\\
    &\ge& \lim_{n \to \infty} \frac{\sum_{i=0}^k (-1)^i{(n-lc)(k-i)+m-1\choose m-1}}{\sum_{i=0}^k (-1)^i{n(k-i)+m-1\choose m-1}}\\
    &=& 1
 \end{eqnarray*}
 The last equality comes from the fact that $c=o(n)$, and $l,l^{\prime}$ are independent of $n$. Hence the probability 
 that an uniformly randomly chosen scoring profile is $o(n)$-collusion-proof goes to one as the number of voters increases 
 and the result follows. \qed
\end{proof}
\begin{observation}
 For veto rule following scoring profiles $(x_1,x_2,\dots,x_m)\in S^n([m])$ are $c$-collusion-proof:
 \begin{enumerate}
  \item $\exists w\in\{1,2,\dots,m\}\text{ such that } x_w-x_i\ge c+1,\forall 1\le i\le m$ $i\neq w$ , i.e. $c_w$ is 
  the winner and its score is at least two more than every other candidates.
  \item For $c<m-1,$ $x_1=x_2=\dots=x_m$.
 \end{enumerate}
\end{observation}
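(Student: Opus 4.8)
The plan is to reuse the ``non-pivotal versus tie-break'' dichotomy already exploited for plurality and $k$-approval, working throughout with the affine normalisation of veto to the score vector $(0,\dots,0,-1)$. In this normalisation every ballot — one of the $n$ truthful votes or one of the $c$ new votes — lowers the score of exactly one candidate by $1$ and leaves the other $m-1$ scores unchanged; this monotone, ``diffuse'' effect of each added ballot is what drives both cases.

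\emph{First family.} Assume $x_w-x_i\ge c+1$ for every $i\neq w$. However the coalition votes, it can decrease the score of $c_w$ by at most $c$ altogether, so $c_w$ retains a score of at least $x_w-c$, while every other $c_i$ keeps a score of at most $x_i\le x_w-(c+1)<x_w-c$. Hence $c_w$ remains the unique maximiser, and the outcome is $c_w$ for \emph{every} pair of reports $\succ_1^c,\succ_2^c$. The outcome being constant, each coalition member is indifferent between the two, so Definition~\ref{newdefCollusionProofness} is satisfied and the profile is $c$-collusion-proof — the same non-pivotal argument as in Propositions~\ref{manpluralcoalitional} and~\ref{mankapprovalcoalition}.

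\emph{Second family.} Let $x_1=\dots=x_m$, so $m\mid n$ and every candidate carries the maximal attainable score $-n/m$. The key counting point is that the coalition's $c$ ballots can lower the scores of at most $c$ distinct candidates, and since $c<m-1$ at least $m-c\ge2$ candidates still carry score $-n/m$ after the coalition has voted, no matter how it votes; thus the set of candidates tied for victory always has size at least two, and the winner is the candidate the (lexicographic, against-the-manipulators) tie-break returns from that set. For $c=1$ this already settles matters: any report other than vetoing one's true last choice re-admits that last choice to the top-scoring set, where the against-the-manipulator tie-break then elects it, which is no improvement, whereas the honest ballot elects the voter's second-worst candidate. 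For general $c$ I would run the analogous Pareto-type argument: if some deviation $\succ_2^c$ produced a winner $w_2$ strictly preferred by \emph{all} coalition members to the winner $w_1$ obtained from $\succ_1^c$, then $w_2$ is nobody's last choice, hence is never vetoed truthfully and therefore lies among the $\ge2$ top-scorers of the honest profile; one then wants to conclude that the against-the-manipulators tie-break, applied to those top-scorers, returns a candidate that some coalition member weakly prefers to $w_2$, contradicting $w_2\succ_j w_1$ for all $j$. The hypothesis $c<m-1$ is genuinely used: with $c\ge m-1$ the coalition can strip the maximal score from all but one designated candidate and dictate the winner, so the claim fails.

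The routine part is the first family, which is immediate once the $(0,\dots,0,-1)$ normalisation is in place. The step I expect to be the real obstacle is closing the Pareto argument for the all-tied family when $c>1$: one must first make precise what ``lexicographic tie-breaking against the manipulators'' means for a coalition whose $c$ members hold different preferences, and then verify that the candidate this rule returns from the (at least two) surviving top-scorers can never be a simultaneous strict improvement for the whole coalition over the outcome it secures by voting honestly.
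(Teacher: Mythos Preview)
Your treatment of the first family is correct and is exactly the paper's argument: the profile is non-pivotal, the outcome is $c_w$ regardless of how the coalition votes, so collusion-proofness is immediate.

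For the second (all-tied) family you have in fact gone further than the paper. The paper's entire proof of that case is the single sentence ``the situation is same as with one voter and thus non-manipulable''; it neither fixes a tie-breaking convention nor treats $c>1$ separately. So your $c=1$ argument already matches and exceeds what the paper supplies, and the gap you flag for $c>1$ is not filled there either.

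Your caution is well placed. Under a \emph{fixed} lexicographic tie-break the all-tied profile need not be $c$-collusion-proof for $c>1$: take $m=4$, $c=2$, tie-break order $c_1\succ_t c_2\succ_t c_3\succ_t c_4$, and both manipulators with true preference $c_4\succ c_3\succ c_2\succ c_1$. Truthful voting (both veto $c_1$) leaves $\{c_2,c_3,c_4\}$ tied and elects $c_2$; the deviation in which one vetoes $c_1$ and the other vetoes $c_2$ leaves $\{c_3,c_4\}$ tied and elects $c_3$, which both strictly prefer. Hence the second claim requires a particular reading of ``tie-breaking for/against the manipulators'' to be true, and the obstacle you identify---making that reading precise for a heterogeneous coalition and then closing the Pareto argument---is genuine and is exactly what the paper's one-line proof leaves open.
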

\begin{proof}
 The first case deals with non-pivotal cases, i.e. the new voter can not change current winner and hence she has no incentive to lie. Thus the scoring profile in
non-manipulable.\\
In the second case, the situation is same as with one voter and thus non-manipulable.\\
In all other cases, let $W$ be the set of all candidates with maximum score, $R$ be the set of all candidates with score one less than winners' and $L$ be the set of rest
of the candidates. Not $W\ne\mathcal{C}$ since otherwise this will be same as case two. Hence at least one of $R$ or $L$ must be non-empty. Now if the 
least preferred candidate of new voters is not in $W$ then new voters can certainly manipulate. \qed
\end{proof}
\begin{lemma}
 For Veto rule, for $m>n$, $\frac{|E^n([m])|}{|S^n([m])|} \ge \left(\frac{m-n+1}{m}\right)^n$.
\end{lemma}
\begin{proof} 
 Consider the following map,\\ 
 \[f:E^n_{(veto)}([m])\longrightarrow E^n_{(plurality)}([m])\]
 \[(x_1,x_2,\dots,x_m)\mapsto(-x_1,-x_2,\dots,-x_m)\]
 The map is clearly bijective and hence the result follows from Theorem\nobreakspace \ref {pluralityEBound}. \qed
\end{proof}
\begin{lemma}
 For Veto rule, for $m>n$, $\frac{|F^n([m])|}{|S^n([m])|} \ge \left(\frac{m-n+1}{m}\right)^n$.
\end{lemma}
\begin{proof} 
 Follows from the fact that $|F^n_{(veto)}([m])| = |F^n_{(plurality)}([m])|$ and Theorem\nobreakspace \ref {kApprovalFBound}. \qed
\end{proof}
\setcounter{theorem}{4}
\begin{theorem}
 Let A be the set of all $c$-collusion-proof Veto scoring profiles. 
 Then $\frac{|A|}{|S^n([m])|} \ge \left(\frac{n-(c+1)(m-1)+1}{mn-(c+1)(m-1)}\right)^{(c+1)(m-1)}$.
\end{theorem}
\begin{proof}
 Let $c$ be the coalition size. Consider the following map,
 \[f:S^{n-(c+1)(m-1)}([m])\longrightarrow A\]
 \[x=(x_1, \dots, x_m)\mapsto (x_1-(c+1), \dots,x_w,\dots,x_m-(c+1))\] 
 $c_w$ is the winner of scoring profile $x$. This map is clearly injective. Hence \\
 $|A|\ge|S^{n-(c+1)(m-1)}([m])|={m+n-(c+1)(m-1)-1\choose m-1}$. We have, 
 \begin{eqnarray*}
 \frac{|A|}{|S^n([m])|} &\ge& \frac{{m+n-(c+1)(m-1)-1\choose m-1}}{{m+n-1\choose m-1}}\\
			&=& \frac{n\dots(n-(c+1)(m-1)-1)}{(m+n-1)\dots(m+n-(c+1)(m-1))}\\
			&\ge& \left(\frac{n-(c+1)(m-1)+1}{mn-(c+1)(m-1)}\right)^{(c+1)(m-1)}
 \end{eqnarray*}
The last inequality comes from the fact that $\frac{a}{b} \ge \frac{a-1}{b-1}, \forall a,b \in \mathbb{N}, 0<a<b$. \qed
\end{proof}
\begin{observation}
 For Borda rule, following and only following scoring profiles $(x_1,x_2,\dots,x_m)\in S^n([m])$ are strategy-proof:
 \begin{enumerate}
  \item $\exists w\in\{1,2,\dots,m\}\text{ such that } x_w-x_i\ge m,\forall 1\le i\le m,i\neq w,$ that is $c_w$ is the winner and its score is at 
  least m more than every other candidates.
  \item $x_1=x_2=\dots=x_m$ 
  \item Every candidate except one are tied with highest score and the loser's score is one less. 
 \end{enumerate}
\end{observation}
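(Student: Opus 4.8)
The plan is to establish both directions of the characterisation by reducing everything to the elementary observation that a single additional voter's only power is to add some permutation of the Borda weight vector $(m-1, m-2, \dots, 1, 0)$ to the current score vector $(x_1, \dots, x_m)$ and then to win the resulting election under the chosen lexicographic tie-breaking rule. Thus, for a voter with true preference $\succ$, let $F(\succ)$ denote the set of candidates she can make win by some report (her true preference entering only through the tie-break), and observe that the scoring profile is strategy-proof exactly when, for every $\succ$, voting truthfully already elects the $\succ$-maximal element of $F(\succ)$; equivalently, the profile is manipulable iff some voter has a report whose winner she strictly prefers to her truthful winner. I would organise the whole argument around this dichotomy.

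For the ``if'' direction I would treat the three families separately. If $x_w - x_i \ge m$ for every $i \ne w$, then after any single report $c_w$ still has score $\ge x_w$ while every other candidate has score $\le x_i + (m-1) \le x_w - 1$; hence $c_w$ wins regardless of the report, $F(\succ) = \{c_w\}$ for every $\succ$, and the profile is trivially strategy-proof. If $x_1 = \dots = x_m$, the truthful report lifts the voter's top choice to score $x_1 + (m-1)$, strictly above every other candidate's score of at most $x_1 + (m-2)$, so voting truthfully elects her favourite, which is optimal. For the third family, say $x_1 = \dots = x_{m-1} = s$ and $x_m = s-1$; here I would split on whether the voter's favourite is $c_m$. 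If it is not, the truthful report makes her favourite the unique winner (it reaches $s + m - 1$, while $c_m$ reaches at most $s + m - 3$ and every other candidate at most $s + m - 2$). If her favourite is $c_m$, then $c_m$ can reach at most $s + m - 2$ and any report attaining that value forces exactly one other candidate to the same score, so $c_m$ can be elected only through that two-way tie; one then checks that under either tie-breaking convention the best the voter can secure coincides with what her truthful report already yields (under ``for the manipulator'' the truthful report already resolves that tie in favour of $c_m$; under ``against the manipulator'' $c_m$ is simply unreachable and the truthful report already elects her second choice). I would also remark that the second and third families are degenerate --- for fixed $n$ and large $m$ they essentially never arise --- which is why they drop out of the asymptotic statements.

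For the ``only if'' direction I would take a scoring profile outside all three families and exhibit a manipulating voter. After sorting the scores as $x_{(1)} \ge x_{(2)} \ge \dots \ge x_{(m)}$, the failure of the first family gives $x_{(1)} - x_{(2)} \le m - 1$. The basic gadget is uniform: have the voter report her favourite first and the favourite's most dangerous rival last, spending the intermediate ranks to hold the remaining candidates down, while taking her true preference to rank that rival immediately below her favourite; one then checks that the truthful report elects the rival (or some other candidate she likes less) whereas the crafted report elects her favourite. Carrying this out needs a short case analysis according to whether $x_{(1)} - x_{(2)} \le m - 2$ or $= m - 1$, and, when several candidates are tied at the top, according to how many sit strictly below the tied block and by how much --- the configurations ruled out being precisely total equality and ``one candidate a single point below an $(m-1)$-fold tie''.

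The main obstacle is this last direction, and within it the interaction with the two tie-breaking rules: the boundary between the strategy-proof scoring profiles and the manipulable ones lies exactly among the near-tied configurations, and it is there that the tie-breaking convention could a priori influence whether a manipulation goes through (as it visibly does for plurality). The delicate point is to show that the manipulating gadget above can always be completed --- appealing, when the scores at the top are within $m-1$ of each other, to the tie broken in the manipulator's favour --- so that the three-line characterisation is valid. Making the residual case analysis genuinely exhaustive, rather than merely illustrative, is where the real work lies.
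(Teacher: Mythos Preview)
Your proposal is correct and follows essentially the same approach as the paper: both directions are handled by direct case analysis, with the ``if'' side treating the three families separately (non-pivotal, all-tied, one-candidate-one-point-below) and the ``only if'' side exhibiting a manipulating voter whose true preference places the runner-up just below her favourite and whose strategic report demotes the current winner to the bottom, splitting on whether the top score is unique or shared. You are in fact more careful than the paper about the interaction with the tie-breaking rule in the boundary configurations, and you are right that making the residual case analysis in the ``only if'' direction genuinely exhaustive is where the work lies --- the paper's own proof is somewhat terse at exactly those points.
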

\begin{proof}
 First case covers non-pivotal scoring profiles, i.e. current winner does not change irrespective of new voter's vote. The second case is similar 
 to one voter case. In the third case, it is always best response for a new voter to cast his/her true preference. Hence these are 
 non-manipulable scoring profiles. Now we will show that every other scoring profiles are manipulable.\\ 
 \emph{Case I - unique candidate with highest score:} In this case, runner up candidate's score is within m of winner's score. Let $c_w$ is the winner and $c_r$ is a runner up.
 Consider a new voter with preference $c_r\succ c_w\succ\dots$. Now if the voter casts truthfully then $c_w$ will be a winner. But casting 
 $c_r\succ\dots\succ c_w$ makes $c_r$ unique winner. Hence this scoring profile is manipulable.\\ 
 \emph{Case II - more than one candidate with highest score:} In this case more than one candidate has got highest score. Without loss of generality we may assume $c_1,c_2,\dots,c_i$ with $1<i<m$. 
 If $i<m-1$ then a new voter  with preference $c_m\succ c_1\succ c_2\succ\dots\succ c_i\succ\dots$ is better off by casting 
 $c_m\succ\dots\succ c_1\succ c_2\succ\dots\succ c_i$. If $i=m-1$ then $x_m<x_1-1$ and hence a  voter with preference 
 $c_m\succ c_1\succ c_2\succ\dots\succ c_{m-1}$ is better off by reporting $c_1\succ c_2\succ\dots\succ c_m$. Hence these profiles are manipulable. \qed
\end{proof}
\setcounter{lemma}{3}
\begin{lemma}
 $\sum_{i=2}^{\infty} (-1)^i (i-1) \frac{x^i}{i!} = 1 - (1+x)e^{-x}$.
\end{lemma}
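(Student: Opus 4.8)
The plan is to reduce everything to the Taylor expansion $e^{-x} = \sum_{i=0}^{\infty} (-1)^i x^i / i!$, which converges absolutely for every $x \in \mathbb{R}$, so that the splitting and reindexing of series below are legitimate. First I would use the elementary identity $\frac{i-1}{i!} = \frac{1}{(i-1)!} - \frac{1}{i!}$, valid for all $i \ge 1$, to write
\[
\sum_{i=2}^{\infty} (-1)^i (i-1)\frac{x^i}{i!} = \sum_{i=2}^{\infty} (-1)^i \frac{x^i}{(i-1)!} - \sum_{i=2}^{\infty} (-1)^i \frac{x^i}{i!}.
\]

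For the first sum on the right I would substitute $j = i-1$, pull out a factor of $-x$, and recognize $\sum_{j=1}^{\infty} (-1)^j x^j/j! = e^{-x}-1$, which gives $-x(e^{-x}-1) = x - xe^{-x}$. For the second sum I would simply add and subtract the missing $i=0$ and $i=1$ terms of the exponential series, obtaining $e^{-x} - 1 + x$. Subtracting the two expressions, the two copies of $x$ cancel and what remains is $1 - xe^{-x} - e^{-x} = 1-(1+x)e^{-x}$, which is the claim.

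There is essentially no hard step here; the only things to be careful about are (i) explicitly invoking absolute convergence of the exponential series to justify the termwise split and the reindexing, and (ii) bookkeeping the low-order terms (note that the $i=1$ term of the left-hand series vanishes since $i-1=0$ there, so starting the sum at $i=2$ rather than $i=1$ is immaterial). An equally short alternative, if preferred, is to observe directly that $-xe^{-x} = \sum_{i\ge 1}(-1)^i i\, x^i/i!$ and $e^{-x}-1 = \sum_{i\ge 1}(-1)^i x^i/i!$, so that subtracting termwise gives $\sum_{i\ge 1}(-1)^i(i-1)x^i/i! = -xe^{-x} - (e^{-x}-1) = 1-(1+x)e^{-x}$ at once.
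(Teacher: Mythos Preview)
Your proposal is correct and essentially the same as the paper's proof: both reduce the left-hand side to the Taylor series of $e^{-x}$ and $xe^{-x}$. The only cosmetic difference is that the paper obtains $-xe^{-x}=\sum_{i\ge 1}(-1)^i\,i\,x^i/i!$ by differentiating the series for $e^{-x}$ and then splits $i=(i-1)+1$, whereas your main argument starts from the coefficient identity $\tfrac{i-1}{i!}=\tfrac{1}{(i-1)!}-\tfrac{1}{i!}$ and reindexes; your ``alternative'' at the end is in fact exactly the paper's argument.
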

\begin{proof}
We know,
 $$
  \begin{array}{rclcl}
   && 			e^{-x}   &=& \sum_{i=0}^{\infty} (-1)^i \frac{x^i}{i!}\\
   &\Rightarrow&	-e^{-x}  &=& \sum_{i=0}^{\infty} (-1)^{i-1} \frac{(i+1)x^i}{(i+1)!}\\
   &\Rightarrow&	-xe^{-x} &=& \sum_{i=1}^{\infty} (-1)^{i} \frac{ix^{i}}{i!}\\
   &\Rightarrow&	-xe^{-x} &=& \sum_{i=2}^{\infty} (-1)^i (i-1) \frac{x^i}{i!}\\
   &&				 &&  + \sum_{i=1}^{\infty} (-1)^i \frac{x^i}{i!}\\
   &\Rightarrow&	-xe^{-x} &=& \sum_{i=2}^{\infty} (-1)^i (i-1) \frac{x^i}{i!} + e^{-x} -1
  \end{array}
 $$
 The second step follows by taking derivative with respect to $x$ on both side. The result follows from the last step. \qed
\end{proof}
\begin{lemma}
 $\forall l \in \mathbb{N}$
 $$1 - (-1)^l (l-1) = \sum_{i=2}^{l-1} (-1)^i (i-1) {l \choose i}.$$
\end{lemma}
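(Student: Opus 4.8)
The plan is to derive the identity from two elementary alternating binomial sums and then extract the desired partial sum by peeling off the boundary terms $i=0$, $i=1$, and $i=l$.

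First I would record the two standard facts. For every integer $l\ge 1$, the binomial theorem applied to $(1-1)^l$ gives $\sum_{i=0}^{l}(-1)^i {l \choose i}=0$. For every integer $l\ge 2$, rewriting $i{l \choose i}=l{l-1 \choose i-1}$ turns $\sum_{i=0}^{l}(-1)^i i{l \choose i}$ into $-l\sum_{j=0}^{l-1}(-1)^j{l-1 \choose j}=-l(1-1)^{l-1}$, which is $0$ since $l-1\ge 1$. Subtracting the first identity from the second yields $\sum_{i=0}^{l}(-1)^i(i-1){l \choose i}=0$ for all $l\ge 2$.

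Next I would isolate the range $2\le i\le l-1$ by splitting off the three indices that fall outside it. The $i=0$ term contributes $(-1)^0(0-1){l \choose 0}=-1$; the $i=1$ term contributes $(-1)^1(1-1){l \choose 1}=0$ (it vanishes thanks to the factor $i-1$); and the $i=l$ term contributes $(-1)^l(l-1){l \choose l}=(-1)^l(l-1)$. Hence $0=\sum_{i=0}^{l}(-1)^i(i-1){l \choose i}=-1+0+(-1)^l(l-1)+\sum_{i=2}^{l-1}(-1)^i(i-1){l \choose i}$, and rearranging gives $\sum_{i=2}^{l-1}(-1)^i(i-1){l \choose i}=1-(-1)^l(l-1)$, which is the claim. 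For $l=2$ the right-hand sum is empty and both sides equal $0$, so the statement still holds; the relevant range throughout the paper is $l\ge 2$, which is all that the application in Theorem\nobreakspace \ref {BordaAsymptoticSTmm} needs.

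There is no genuinely hard step here; the only point requiring care is the bookkeeping of which indices lie outside the summation window $[2,l-1]$ and verifying that the $i=1$ term is exactly $0$, so that the two "missing" contributions are precisely the $i=0$ and $i=l$ terms. An alternative route, if one wishes to avoid the identity $i{l \choose i}=l{l-1 \choose i-1}$, is a short induction on $l$ via Pascal's rule ${l \choose i}={l-1 \choose i}+{l-1 \choose i-1}$, but it demands the same boundary-term care and is no shorter.
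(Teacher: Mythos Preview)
Your proof is correct. Both your argument and the paper's reduce to the single identity $\sum_{i=0}^{l}(-1)^i(i-1)\binom{l}{i}=0$ (for $l\ge 2$) and then strip off the $i=0$, $i=1$, and $i=l$ terms. The difference lies only in how that vanishing sum is obtained: the paper expands $(1-x)^l$, divides by $x$, differentiates (so that the coefficient of $x^{i-2}$ picks up the factor $i-1$), and evaluates at $x=1$, whereas you combine the two standard alternating sums $\sum(-1)^i\binom{l}{i}=0$ and $\sum(-1)^i i\binom{l}{i}=0$ via the committee-chair identity $i\binom{l}{i}=l\binom{l-1}{i-1}$. Your route is a bit more elementary, avoiding any calculus; the paper's route has the advantage of producing the factor $(i-1)$ in one mechanical step rather than as a subtraction of two separate identities. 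Either way the boundary bookkeeping is identical and you have handled it cleanly.
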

\begin{proof}
 We know that $\forall l \in \mathbb{N}, x \in \mathbb{R}^+$,
 $$
 \begin{array}{rclcl}
  &&		(1-x)^l 						&=&	\sum_{i=0}^{l} (-1)^l {l \choose i} x^i\\
  &\Rightarrow&	\frac{(1-x)^l}{x} 					&=&	\sum_{i=0}^{l} (-1)^l {l \choose i} x^{i-1}\\
  &\Rightarrow&	-\frac{(1-x)^l}{x^2} - l\frac{(1-x)^{l-1}}{x} 		&=&	\sum_{i=0}^{l} (i-1) (-1)^l {l \choose i} x^{i-2}\\
  &\Rightarrow&	1 - (-1)^l (l-1)				 	&=&	 \sum_{i=2}^{l-1} (-1)^i (i-1) {l \choose i}
 \end{array}
 $$
 The second step is derived by dividing both side by $x$. At third step, we take derivative with respect to $x$. We have 
 instantiated $x$ to be 1 at fourth step. \qed
\end{proof}
\begin{lemma}
 $|A| = \sum_{r=2}^{m} (-1)^r (r-1) \sum_{I \in {[m] \choose r}} |A_I|$.
\end{lemma}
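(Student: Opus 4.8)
The plan is to prove the identity by a double‑counting argument: classify each element according to how many of the sets $A_1,\dots,A_m$ it belongs to, and compare the contribution of each class to the two sides.

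First I would set up notation. Since each $A_i$ is finite, the union $U := \bigcup_{i=1}^m A_i$ is finite, and for $x \in U$ write $S_x := \{i \in [m] : x \in A_i\}$ and $d(x) := |S_x|$ for its \emph{degree}. By the definition of $A$ we have $A = \{x \in U : d(x) \ge 2\}$, so $|A| = \sum_{x \in U} \mathbf{1}[d(x) \ge 2]$. Next I would rewrite the right‑hand side by exchanging the order of summation. For a fixed $r$, an element $x \in U$ lies in $A_I = \cap_{i \in I} A_i$ exactly when $I \subseteq S_x$, and the number of such $r$-subsets is $\binom{d(x)}{r}$, which is $0$ when $r > d(x)$. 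Hence $\sum_{I \in \binom{[m]}{r}} |A_I| = \sum_{x \in U} \binom{d(x)}{r}$, and therefore the right‑hand side equals $\sum_{x \in U} w(d(x))$, where $w(d) := \sum_{r=2}^{m} (-1)^r (r-1) \binom{d}{r}$; since $\binom{d}{r} = 0$ for $r > d$ and $d \le m$, we may freely write $w(d) = \sum_{r=2}^{d} (-1)^r (r-1) \binom{d}{r}$.

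It then remains to prove the pointwise identity $w(d) = \mathbf{1}[d \ge 2]$ for every $d \in \mathbb{N}$; summing it over $x \in U$ yields $\sum_{x \in U}\mathbf{1}[d(x)\ge2] = |A|$, completing the proof. The cases $d = 0$ and $d = 1$ are immediate, since every binomial coefficient $\binom{d}{r}$ with $r \ge 2$ then vanishes, giving $w(0) = w(1) = 0$. For $d \ge 2$ I would peel off the top term $r = d$, which contributes $(-1)^d (d-1)\binom{d}{d} = (-1)^d(d-1)$, and apply Lemma~\ref{sum} with $l = d$ to the remaining sum, obtaining $\sum_{r=2}^{d-1} (-1)^r (r-1)\binom{d}{r} = 1 - (-1)^d (d-1)$; the two pieces add up to $1$, so $w(d) = 1$. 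The boundary value $d = 2$ is consistent, as the remaining sum is then empty and Lemma~\ref{sum} gives $1 - (-1)^2(2-1) = 0$.

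I do not expect a genuine obstacle here; the only points requiring care are the bookkeeping at the summation boundaries — namely using $\binom{d}{r} = 0$ for $r > d$ to replace the upper limit $m$ by $d$, and separating the $r = d$ term before invoking Lemma~\ref{sum}, whose summation range only runs up to $l-1$.
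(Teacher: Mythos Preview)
Your proof is correct and follows essentially the same strategy as the paper: both arguments fix an element of degree $d$ and verify that its total contribution to the right-hand side equals $\mathbf{1}[d\ge 2]$, invoking Lemma~\ref{sum} for the key identity. The only difference is organizational---the paper phrases this as an induction determining the coefficients $a_r$ one at a time, whereas you evaluate the full sum $w(d)$ directly by splitting off the top term; your version is marginally cleaner but not genuinely different.
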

\begin{proof}
 Let us define $|A| = \sum_{r=2}^{m} a_r \sum_{I \in {[m] \choose r}} |A_I|$ such that after $r$ terms, all the elements 
 which are present in at most $r$ many sets have been counted exactly once. Hence it is enough to prove that,
 $$a_r = (-1)^r (r-1),\forall 2\le r \le m.$$
 We will prove the above statement by induction on $r$.\\
 \textit{Base case :} $a_2 = 1$ since no elements and particularly the elements present in exactly two sets have not been counted.\\
 \textit{Inductive step :} Let us assume the result for $r \le l-1$.\\
 Now till the first $l-1$ terms, the elements present in exactly $l$ many sets have been counted exactly $\sum_{i=2}^{l-1} a_i {l \choose i}$ 
 many times. Hence,
 $$
 \begin{array}{rclcl}
  a_l &=& 1 - \sum_{i=2}^{l-1} a_i {l \choose i}\\
      &=& 1 - \sum_{i=2}^{l-1} (-1)^i (i-1) {l \choose i}\\
      &=& (-1)^l (l-1)
 \end{array}
 $$
 The second step follows from the inductive hypothesis and the third step follows from Lemma\nobreakspace \ref {sum}. \qed
\end{proof}
\begin{lemma}
 For fixed $n$ and \textit{large} $m$, 
 if in a voting profile $(\succ_1, \dots, \succ_n) \in \mathcal{L(C)}^n$, 
 $\exists c_i,c_j \in \mathcal{C}, c_i \ne c_j$ such that both $c_i$ and $c_j$ are ranked in the top 
 $l$ positions in each preference $\succ_i, 1\le i \le n$ where $(l-1)n<m$, then the profile is manipulable.
\end{lemma}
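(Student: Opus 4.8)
The plan is to argue by contradiction through the exact characterization of strategy-proof Borda scoring profiles in Proposition~\ref{manborda}. Suppose the voting profile $(\succ_1,\dots,\succ_n)$ is \emph{not} manipulable, so its induced Borda scoring profile $(x_1,\dots,x_m)\in S^n([m])$ is strategy-proof, where $x_t$ denotes the total Borda score of candidate $c_t$. By Proposition~\ref{manborda} the scoring profile must have one of three forms; the ``large $m$'' hypothesis is there precisely so that the two tie-based forms (Items~\ref{bordaequal} and~\ref{bordatied}) are ruled out, as I note below. Hence the scoring profile must satisfy Item~1: there is a candidate $c_w$ with $x_w - x_t \ge m$ for every $t \ne w$.

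The key quantitative input is a lower bound on the scores of $c_i$ and $c_j$. Under the Borda score vector $(m-1,m-2,\dots,0)$, a candidate placed among the top $l$ positions of a vote earns at least $m-l$ points from that vote; since both $c_i$ and $c_j$ lie in the top $l$ positions of each of the $n$ preferences, their total scores are at least $n(m-l)$. Trivially, every score is at most $n(m-1)$, since no candidate can earn more than $m-1$ from a single vote.

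Now I combine the two observations. Because $c_i \ne c_j$, at least one of them is distinct from $c_w$; call it $c_a$, with score $x_a \ge n(m-l)$. Item~1 gives $x_w - x_a \ge m$, whence
\[
 m \;\le\; x_w - x_a \;\le\; n(m-1) - n(m-l) \;=\; n(l-1),
\]
contradicting the hypothesis $(l-1)n < m$. Therefore $(x_1,\dots,x_m)$ is not strategy-proof and the voting profile is manipulable. The same lower bound settles the tie-based forms for large $m$ as well: in an all-tied profile every score is $n(m-1)/2$, and in an all-but-one-tied profile the common top score is essentially $n(m-1)/2$ while at least one of $c_i,c_j$ must attain it; both contradict $x_a \ge n(m-l)$ once $m$ exceeds roughly $2l$.

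I do not expect a genuine obstacle here: once Proposition~\ref{manborda} is available, the proof collapses to the single displayed inequality. The only point needing a little care is making the ``large $m$'' threshold precise --- it suffices that $m$ be large enough (on the order of $2l$) that the score bound $n(m-l)$ for $c_i$ and $c_j$ overtakes $n(m-1)/2$, which is exactly the condition the footnote attached to the lemma statement is invoking, and it is consistent with the standing assumption $(l-1)n < m$.
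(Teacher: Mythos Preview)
Your proof is correct and follows essentially the same line as the paper's: both pick one of $c_i,c_j$ that is not the winner, bound its Borda score below by $n(m-l)$ and the winner's above by $n(m-1)$, and conclude that the gap is at most $(l-1)n<m$, which violates Item~1 of Proposition~\ref{manborda}. The only difference is cosmetic --- you phrase it as a contradiction and also spell out why the tie-based Items~\ref{bordaequal} and~\ref{bordatied} are excluded for large $m$, whereas the paper simply appeals to the footnote for that and argues directly.
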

\begin{proof}
 Since we are working with voting rules and \textit{not} with voting correspondences, the winner is unique. Hence 
 WLOG WMA\footnote{WLOG : Without Loss Of Generality. WMA : We May Assume.} that $c_i$ is not the winner. Let 
 $c_w$ be the winner. Then,
 $$ score(c_w) \le (m-1)n \text{ and } score(c_i) \ge (m-l)n $$
 Hence we have,
 $$ score(c_w) - score(c_i) \le (l-1)n \le m $$
 The above statement along with \MakeUppercase proposition\nobreakspace \ref {manborda} proves the lemma. \qed
\end{proof}

\end{document}